\documentclass[a4paper, 3p]{elsarticle} 
\usepackage{amsmath,amssymb, amsthm,  graphicx} 
\usepackage{lineno}
\usepackage{setspace}
\usepackage{xcolor} 
\usepackage{float}  

\newtheorem{thm}{Theorem}[section]

\theoremstyle{definition}
\newtheorem{defn}[thm]{Definition}
\newtheorem{rem}[thm]{Remark}
\newenvironment{pf}{\begin{proof}}{\end{proof}}

\begin{document}

\begin{frontmatter}
\title{A Neural Network Framework for Geodesic-Like Curve Computation on Parametric Surfaces}
\journal{Computer Aided Geometric Design} 

\author[NCYU]{Sheng-Gwo Chen\corref{cor1}}\ead{csg@mail.ncyu.edu.tw}
\author[NCYU]{Chen-Chang Peng}\ead{ccpeng@mail.ncyu.edu.tw}

\cortext[cor1]{Corresponding author}

\address[NCYU]{Department of Applied Mathematics, National Chiayi University, Chia-Yi 600, Taiwan}

\begin{abstract}
    The concept of geodesic-like curves was introduced by Chen in 2010 as a method for estimating shortest
    paths (geodesics) on parametric surfaces, with its convergence established theoretically.
    However, an efficient numerical computational framework has not yet been developed.
    In this paper, we propose an elegant and efficient approach for
    computing geodesic-like curves by leveraging deep learning and
    Physics-Informed Neural Networks (PINNs). Under the proposed
    framework, not only can single parametric surfaces be handled
    efficiently, but a broad class of complex parametric surfaces
    including multi-surface systems with $C^0$ or higher continuity and surfaces of revolution can also be robustly addressed.
\end{abstract}

\end{frontmatter}
 \section{Introduction}

The computation of geodesics on smooth manifolds is a fundamental
problem in differential geometry, with important applications in
Computer-Aided Geometric Design, robotics, and computer vision,
among other fields.\cite{Paluszny,Ravi2,Sanchez,Sprynski}
A geodesic
is mathematically defined as the shortest path between two points on
a surface.\cite{Carmo,Carmo2} Such curves play a crucial role in
trajectory optimization for autonomous navigation and in modeling
wave propagation in complex physical environments.
Despite their
broad applicability, the efficient computation of geodesics on
arbitrary parametric surfaces remains a challenging problem.

Regarding numerical approaches for geodesic computation on surfaces,
Beck et al.\cite{Beck1} computed geodesics on spline surfaces using
a fourth-order Runge-Kutta method, while Sneyd and
Reskin\cite{Sneyd1} improved this approach through a second-order
Runge-Kutta scheme. Hotz and Hange\cite{Hotz}, as well as Peng et
al.\cite{Zhang1}, proposed geometric approaches for computing
geodesics on arbitrary surfaces. Ying and Candes\cite{Ying1} introduced a
phase-flow method for computing geodesic flows on smooth and compact
surfaces. Since the shortest smooth path between two points on a
regular surface is a geodesic, efficient numerical approximations
are of considerable interest. In 2005, Kasap et al.\cite{Kasap}
presented a finite difference method for computing geodesics between
two points on a surface. Subsequently, in 2010, Chen\cite{Chen}
proposed the concept of geodesic-like curves to approximate shortest
smooth paths on parametric surfaces, and their convergence was
established theoretically. In addition, numerous accurate discrete
algorithms for geodesic computation on
surfaces,\cite{Bulut1,Meng1,Ravi3,Tucker} polygonal
surfaces,\cite{Kanai1,Polt} and triangular
meshes\cite{Bose1,Chen2,Crane1,Li1,Martinez,Surazhsky} have been
developed.

Recently, Scientific Machine Learning (SciML) has emerged as a
promising paradigm for solving nonlinear geometric problems. Neural
networks, as powerful universal function approximators, enable the
solution of partial differential equations without relying on
traditional grid-based discretization. In particular,
Physics-Informed Neural Networks (PINNs)\cite{Raissi1,Waheed1} and
the Deep Ritz method\cite{Rowan1} have been employed to minimize
energy functionals for geodesic computation. Furthermore,
Zhang\cite{Zhang2} proposed the Neural Geodesic Field (NeuroGF), a
neural-network-based framework for computing geodesics on
three-dimensional mesh models.

In this paper, we propose a neural-network-based framework for
geodesic estimation on parametric surfaces using geodesic-like
curves.\cite{Chen} The theory of geodesic-like curves, introduced by
Chen, provides a sequence of approximating curves that converges to
the true geodesic as the curve order increases. Instead of directly
solving the geodesic differential equations, the proposed framework
employs an artificial neural network to optimize the parameters of
geodesic-like curves. By combining the theoretical convergence
properties of geodesic-like curves with the optimization capability
of neural networks, the proposed approach enables accurate and
efficient estimation of two-point geodesics on parametric surfaces.

  \section{Geodesics and Geodesic-like Curves on Regular Surfaces}

  \subsection*{Geodesics on Regular Surfaces}
  Let $\Sigma$ be a regular surface with a parametrization $\mathbf{x} : U \subset \mathbb{R}^2 \rightarrow \Sigma$
  where $U$ is an open subset of $\mathbb{R}^2$.
  Let $\gamma:[a,b]\to U$ be a smooth curve on $\Sigma$.
  Then the arc length of $\gamma$ and the energy of $\gamma$ are given by
  \begin{equation}
    L(\gamma) = \int_a^b \| (\mathbf{x} \circ \gamma)'(t) \| dt
  \end{equation}
  and
  \begin{equation}
    E(\gamma) = \frac{1}{2} \int_a^b \| (\mathbf{x} \circ \gamma)'(t) \|^2 dt,
  \end{equation}
  respectively. A proper variation of $\gamma$ is a
  differentiable map $\phi :[-\epsilon, \epsilon] \times [a,b]
  \rightarrow U$ such that $\phi_s(\cdot) := \phi(s, \cdot)$ is a
  smooth curve from $\gamma(a)$ to $\gamma(b)$ for each
  $s \in [-\epsilon, \epsilon]$ and $\phi_0 = \gamma$; see
  Figure \ref{fig_variation}. Specifically, it satisfies
  \begin{equation}
  \begin{array}{ll}
    \phi_0(t) = \phi(0,t)= \gamma(t), & t \in [a,b], \\
    \phi_s(a) = \phi(s,a)=\gamma(a), & s \in [-\epsilon, \epsilon], \\
    \phi_s(b) = \phi(s,b)=\gamma(b), & s \in [-\epsilon, \epsilon].
  \end{array}
  \end{equation}

  The arc length and energy of the variation $\phi_s$ are given by
  \begin{equation}
    L(s) = \int_a^b \| (\mathbf{x} \circ \phi_s)'(t) \| dt,
  \end{equation}
  and
  \begin{equation}\label{energy_functional}
    E(s) = \frac{1}{2} \int_a^b \| (\mathbf{x} \circ \phi_s)'(t) \|^2 dt
  \end{equation}
  respectively.
  Let $V(t) = \frac{\partial }{\partial s} ( \mathbf{x} \circ \phi)(s,t) |_{s=0}$ denote the variational vector field
  of $ \mathbf{x} \circ \phi $. The first variation formula of $E(s)$
  is given by
  \begin{equation}
  \begin{aligned} E'(0)  & = \int _a^b \left \langle \frac{\partial }{\partial t} (\mathbf{x} \circ \phi) (0,t),
                          \frac{D}{dt}V(t) \right \rangle dt \cr\cr
                      & = - \int_a^b \left \langle   \frac{D}{dt} ( \frac{d}{dt}( \mathbf{x} \circ \gamma)  ) , V(t) \right
                      \rangle  dt.
    \end{aligned}
  \end{equation}
  If the curve $\gamma$ satisfies the equation
  $$\frac{D}{dt} \frac{d}{dt} ( \mathbf{x} \circ \gamma) = 0\mbox{ for all }t \in [a,b],$$
  then  $\mathbf{x} \circ \gamma$ is called a geodesic on $\Sigma$.
  For simplicity, $\gamma$ is also referred to as a geodesic.
  The following theorem is immediate\cite{Carmo,Carmo2}:

  \begin{figure}
  \centering
  \includegraphics[width=0.5\textwidth]{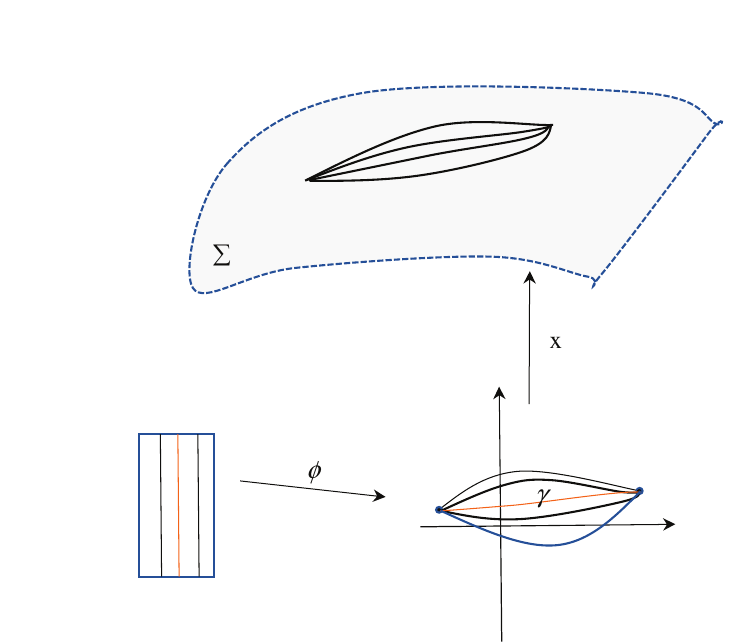}
  \caption{The proper variation of a curve $\gamma$ on $\Sigma$.}\label{fig_variation}
  \end{figure}

  \begin{thm}
  Let $\Sigma$ be a regular surface with a parametrization $\mathbf{x} : U \subset \mathbb{R}^2 \rightarrow \Sigma$ and $\gamma$
  a smooth curve on $\Sigma$. The following statements are equivalent:
  \begin{enumerate}
    \item $\mathbf{x} \circ \gamma$ is a geodesic on $\Sigma$.
    \item $\gamma$ is a critical point of the energy functional, that is, $E'(0)=0$.
    \item $\gamma$ is a critical point of the arc length functional, that is, $L'(0)=0$.
  \end{enumerate}

  \end{thm}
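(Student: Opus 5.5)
The plan is to read everything off the first variation formula already written above for $E'(0)$, together with the analogous formula for $L'(0)$. Throughout, "critical point" means $E'(0)=0$ (resp. $L'(0)=0$) for \emph{every} proper variation $\phi$ of $\gamma$. Write $\alpha=\mathbf{x}\circ\gamma$ and $A(t)=\frac{D}{dt}\alpha'(t)$.

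\textbf{(1)$\Leftrightarrow$(2).} If $\alpha$ is a geodesic, then $A\equiv 0$, so the second line of the first variation formula gives $E'(0)=0$ for every proper variation.

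Conversely, assume $E'(0)=0$ for all proper variations. Take a smooth bump $f\ge 0$ on $[a,b]$ with $f(a)=f(b)=0$ and $f>0$ on $(a,b)$. Pull back the tangent field $f(t)A(t)$ to $U$ via $d\mathbf{x}^{-1}$, giving $w(t)$, and set $\phi(s,t)=\gamma(t)+s\,w(t)$. For $|s|$ small this stays in the open set $U$ by compactness of $\gamma([a,b])$. It is a proper variation, since $w(a)=w(b)=0$, and its variational field is $V=fA$. The formula then gives
\[
0=-\int_a^b f(t)\,\|A(t)\|^2\,dt .
\]
By continuity and positivity of $f$ on $(a,b)$, $A\equiv 0$, so $\alpha$ is a geodesic.

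\textbf{(1)$\Rightarrow$ constant speed.} Next I record that a geodesic has constant speed, because
\[
\frac{d}{dt}\|\alpha'\|^2=2\langle A,\alpha'\rangle=0 .
\]
So $\|\alpha'\|\equiv c$, and $c>0$ whenever $\alpha$ is nonconstant.

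\textbf{The formula for $L'(0)$.} Differentiating under the integral, using the symmetry $\frac{D}{ds}\partial_t=\frac{D}{dt}\partial_s$ and compatibility of the metric, gives
\[
L'(0)=\int_a^b\Big\langle \frac{\alpha'}{\|\alpha'\|},\frac{D}{dt}V\Big\rangle dt .
\]
This is valid when $\alpha$ is regular, i.e. $\alpha'\neq 0$. Integrating by parts, with $V(a)=V(b)=0$, yields
\[
L'(0)=-\int_a^b\Big\langle \frac{D}{dt}\Big(\frac{\alpha'}{\|\alpha'\|}\Big),V\Big\rangle dt .
\]

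\textbf{(3) and the parametrization issue.} When $\|\alpha'\|\equiv c>0$, the two formulas give $L'(0)=E'(0)/c$ for every variation, so (2)$\Leftrightarrow$(3) immediately. Combined with the constant-speed observation, this gives (1)$\Rightarrow$(3).

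For (3)$\Rightarrow$(1), the same bump-function argument applied to the formula for $L'(0)$ shows that the unit tangent $T=\alpha'/\|\alpha'\|$ satisfies $\frac{D}{dt}T=0$. Thus $\alpha$ is a geodesic after reparametrization by arc length.

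This is the one genuine obstacle. Any regular reparametrization of a geodesic is still a critical point of $L$, since $L$ is reparametrization invariant. However, it is not a geodesic in the sense $A=0$, and not a critical point of $E$. So statement (3) only matches (1) and (2) under the standing convention, implicit in the cited references, that $\gamma$ is regular and parametrized proportionally to arc length. I will state this assumption explicitly. Under it, $\frac{D}{dt}T=c^{-1}A$, so (3) gives $A\equiv0$ and closes the cycle of equivalences.
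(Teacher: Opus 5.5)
Your proof is correct, and it is the standard argument the paper implicitly relies on. The paper gives no proof of its own: it calls the result immediate from the displayed first variation formula and cites do Carmo, and your bump variation $V=fA$ for (1)$\Leftrightarrow$(2), together with the first variation of $L$, is exactly how those references argue.

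You are also right that (3)$\Rightarrow$(1) fails as literally stated. A non-constant-speed reparametrization of a geodesic is still critical for $L$, but it has $\frac{D}{dt}\alpha'=\bigl(\tfrac{d}{dt}\|\alpha'\|\bigr)T\neq 0$. So the regularity and constant-speed hypothesis you add is genuinely needed and is missing from the paper's statement; it is also the convention in the cited sources.
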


  The shortest path between two distinct points $\mathbf{p},  \mathbf{q} \in \Sigma$
   is defined as the solution of the following minimization problem:
  \begin{equation}
    \min \{ L(\gamma) : \gamma \mbox{ is a smooth curve on } \Sigma \mbox{ with } \gamma(a) = \mathbf{p}, \gamma(b)
    = \mathbf{q} \}.
  \end{equation}
  It is also a critical point of the energy functional defined in Equation (\ref{energy_functional}),
  and hence a geodesic on $\Sigma$.

  \subsection*{Geodesic-like Curves on Regular Surfaces}

  The shortest path between two distinct points on $\Sigma$
  is sought within the space $C([a,b], U)$.
  In practice, one typically restricts attention to curves
   with favorable geometric properties,
   such as B\'ezier curves, B-spline curves, NURBS curves,
   and Fourier polynomial representations,
   rather than considering all smooth (or continuous) curves.

  Let $\mathcal{V}$ be a subset of $C([a,b],U)$.
  The minimal geodesic-like curve on $\Sigma$ with respect to
  $\mathcal{V}$ is defined as follows:

  \begin{defn}
  Let $\Sigma$ be a regular surface with a parametrization $\mathbf{x} : U \subset \mathbb{R}^2 \rightarrow \Sigma$ and
  $\mathcal{V}$ be a family of smooth curves from $[a,b]$ into $U$, i.e., $\mathcal{V}\subset C^{\infty}([a,b],U)$.
  Suppose that $\mathbf{p}, \mathbf{q}\in \mathbf{x}(U)$ are two
  distinct points on $\Sigma$.  A minimal geodesic-like curve between $\mathbf{p}$ and $\mathbf{q}$ on $\Sigma$
  with respect to $\mathcal{V}$
  is a curve $\gamma \in \mathcal{V}$ satisfying
  $\mathbf{x}(\gamma(a)) = \mathbf{p}$, $\mathbf{x}(\gamma(b)) = \mathbf{q}$
  such that
  \begin{equation}
    E(\gamma) = \min \{ E(\alpha) : \alpha \in \mathcal{V},\mathbf{x}(\alpha(a)) = \mathbf{p}, \mathbf{x}(\alpha(b)) = \mathbf{q} \}.
  \end{equation}
  \end{defn}

  From a topological perspective, \cite{Munkres} the following theorem is immediate:

  \begin{thm}\label{thm_geodesic_like_curve1}
  Let $\Sigma$ be a regular surface with a parametrization
  $\mathbf{x} : U \subset \mathbb{R}^2 \rightarrow \Sigma$, and let
  $\gamma : [a,b] \rightarrow U$ be a geodesic on $\Sigma$
  without conjugate points.
  Suppose that $\{\mathcal{V}_n\}_{n=1}^{\infty}$
  is a sequence of subsets of $C([a,b],U)$ satisfying
  $$\mathcal{V}_n \subset \mathcal{V}_{n+1},~n \in \mathbb{N}$$
  and that
  $\mathcal{V} = \bigcup\limits_{n=1}^{\infty} \mathcal{V}_n$
  is dense in $C([a,b], U)$ with respect to the uniform topology.
  Then there exists a sequence of minimal geodesic-like curves
  $\{\gamma_n  \in \mathcal{V}_n\}_{n=1}^{\infty}$ converges to
  $\gamma$ as $n\to\infty$.
  \end{thm}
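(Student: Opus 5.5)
The plan is a direct $\varepsilon$-argument that combines density of $\mathcal{V}$ with a local minimality property of $\gamma$ coming from the absence of conjugate points. Fix the endpoints $\mathbf{p}=\mathbf{x}(\gamma(a))$ and $\mathbf{q}=\mathbf{x}(\gamma(b))$. For the comparison to make sense, I will read the admissible sets as $\mathcal{A}_n=\{\alpha\in\mathcal{V}_n : \alpha(a)=\gamma(a),\ \alpha(b)=\gamma(b)\}$ and assume, as is implicit in the statement, that density holds relative to curves with these fixed endpoints. I also need energy convergence, so the approximation must be taken in the $C^1$ sense. For the polynomial-type families the paper has in mind, such as B\'ezier or Fourier representations, uniform density of the derivatives yields this. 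I will state this strengthening explicitly as the interpretation of "dense."

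\textbf{Step 1 (local minimality).} Since $\gamma$ has no conjugate points, the second variation of $E$ at $\gamma$ is positive definite on proper variations, by the Jacobi/Morse index theory of \cite{Carmo2}. Hence $\gamma$ is a strict local minimizer of $E$ among curves with the same endpoints. Concretely, there is a tubular neighbourhood $W_\delta=\{\alpha:\|\alpha-\gamma\|_{C^1}<\delta\}$ such that $E(\alpha)>E(\gamma)$ for every $\alpha\in W_\delta\setminus\{\gamma\}$ with the fixed endpoints. Moreover, for each $0<\varepsilon<\delta$,
\[
m(\varepsilon)=\inf\{E(\alpha)-E(\gamma):\ \varepsilon\le\|\alpha-\gamma\|_{C^1}\le\delta\}>0 .
\]
The positivity of $m(\varepsilon)$ is the main obstacle. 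The set over which the infimum is taken is not compact in $C^1$, so a bare "topological" argument does not suffice. My first attempt is to use coercivity of the second variation: $E(\alpha)-E(\gamma)\ge c\|\alpha-\gamma\|_{H^1}^2-o(\|\alpha-\gamma\|^2)$. Combined with the fact that $\mathcal{V}_n$ is finite dimensional in the examples, this should let me work on closed bounded sets where compactness is available. Alternatively, I could restrict the minimization in each $\mathcal{V}_n$ to the compact set $\overline{W_\delta}\cap\mathcal{A}_n$.

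\textbf{Step 2 (existence of $\gamma_n$).} Define $\gamma_n$ as a minimizer of $E$ over $\overline{W_\delta}\cap\mathcal{A}_n$, that is, a localized minimal geodesic-like curve. Existence follows from continuity of $E$ and compactness in finite dimensions. Density and monotonicity $\mathcal{V}_n\subset\mathcal{V}_{n+1}$ make this set nonempty for all large $n$; for small $n$, any admissible choice will do.

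\textbf{Step 3 (convergence).} Given $\varepsilon$, use density to pick $N$ and $\beta\in\mathcal{A}_N$ with $E(\beta)<E(\gamma)+m(\varepsilon)$. This is possible because $E$ is continuous in $C^1$. By nestedness, $\beta\in\mathcal{A}_n$ for all $n\ge N$, so $E(\gamma_n)\le E(\beta)<E(\gamma)+m(\varepsilon)$. By the definition of $m(\varepsilon)$, this forces $\|\gamma_n-\gamma\|_{C^1}<\varepsilon$. Hence $\gamma_n\to\gamma$ uniformly, and in fact in $C^1$. Finally, since $\gamma_n$ lies in the interior of $W_\delta$ for large $n$, it is a genuine local minimizer of $E$ on $\mathcal{A}_n$, i.e.\ a minimal geodesic-like curve in the local sense, which completes the argument.
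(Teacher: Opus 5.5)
Your route differs from the paper's, and in most respects it is more careful. The paper takes $\gamma_n$ to be \emph{global} minimizers over $\mathcal{V}_n$. It then asserts $E(\alpha_n)\to E(\gamma)$ from uniform convergence alone, asserts $E(\gamma)\le\liminf E(\gamma_n)$ because $\gamma$ is a local minimizer, and passes from $E(\gamma_n)\to E(\gamma)$ to $\gamma_n\to\gamma$ by ``uniqueness of the local minimizer.'' Your fixed-endpoint, $C^1$-density reading is what makes the first assertion true. Localizing the minimization near $\gamma$ is what makes the second true: for global minimizers it fails whenever a strictly shorter geodesic joins the same endpoints, e.g.\ the longer of two non-homotopic geodesics on a catenoid, which has $K<0$ and so no conjugate points. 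Your gap function $m(\varepsilon)$ is the right quantitative substitute for the paper's unproved appeal to uniqueness. However, as you set it up, Step~1 is false, and Steps~2--3 rest on it.

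Take the flat plane, $\gamma(t)=(t,0)$ on $[0,1]$, and $\eta=(0,h)$, where $h$ is a smooth bump supported in an interval of length $w$ with $\|h'\|_\infty=\varepsilon$. Then $\|\eta\|_{C^1}\ge\varepsilon$, yet $E(\gamma+\eta)-E(\gamma)=\frac12\int_0^1 (h')^2\,dt\le\frac12\varepsilon^2 w\to 0$ as $w\to 0$. So $m(\varepsilon)=0$, and by the very $C^1$-density you assume, the same degeneracy persists inside $\bigcup_n\mathcal{A}_n$.

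Neither remedy you mention closes this. A lower bound $c\|\alpha-\gamma\|_{H^1}^2$ is useless when the $C^1$-distance is at least $\varepsilon$ but the $H^1$-distance is tiny. Compactness of $\overline{W_\delta}\cap\mathcal{A}_n$ gives a positive gap for each fixed $n$, but not the single $m(\varepsilon)$, valid for all $n\ge N$, that Step~3 needs.

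The repair is to measure everything in the energy norm. For $\eta=\alpha-\gamma\in H^1_0$, the second variation of $E$ at $\gamma$ is the index form. Its principal part is equivalent to $\|\eta'\|_{L^2}^2$, and its lower-order terms are compact on $H^1_0$. Hence positive definiteness (no conjugate points) upgrades to coercivity, with the form bounded below by $c\|\eta\|_{H^1}^2$. Since $\|\eta\|_\infty\le C\|\eta\|_{H^1}$ in one dimension, the Taylor remainder is $O(\|\eta\|_{H^1}^3)$. Thus $E(\gamma+\eta)-E(\gamma)\ge\|\eta\|_{H^1}^2\,(c-C'\|\eta\|_{H^1})$. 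If you take $W_\delta$ to be an $H^1$-ball with $\delta<c/C'$, you get $m(\varepsilon)\ge\varepsilon^2(c-C'\delta)>0$. Steps~2 and~3 then go through unchanged and give $\gamma_n\to\gamma$ in $H^1$, hence uniformly. They do not give convergence in $C^1$, as you claim at the end.
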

  \begin{pf}
  Suppose that $\gamma$ is a geodesic on $\Sigma$ satisfying
  $\gamma(a) = \mathbf{p}$ and $\gamma(b) = \mathbf{q}$.
  For each positive integer $n$, let $\gamma_n$ denote a minimal
  geodesic-like curve between $\mathbf{p}$ and $\mathbf{q}$
  on $\Sigma$ with respect to $\mathcal{V}_n$. We show that
  $$\lim\limits_{n \rightarrow \infty} \gamma_n = \gamma.$$
  Since $\mathcal{V}$ is dense in $C([a,b], U)$,
  there exists a sequence of curves
  $\{ \alpha_n \in \mathcal{V}_n\}_{n=1}^{\infty}$
  converging uniformly to $\gamma$.
  By the definition of the minimal geodesic-like curve, we have
  \begin{equation}
  E(\gamma_n) \leq E(\alpha_n) \text { for each }  n \in \mathbb{N}.
  \end{equation}
  Since $\alpha_n\to\gamma$ uniformly,
  continuity of the energy functional implies
  \begin{equation}
     \lim_{n \rightarrow \infty} E(\alpha_n)
     = E(\gamma).
     \end{equation}
  Furthermore, because $\gamma$ is a geodesic without conjugate points,
   it is a local minimizer of the energy functional. Therefore,
  \begin{equation}
    E(\gamma) \leq \liminf_{n\to\infty} E(\gamma_n)
    \leq \limsup_{n\to\infty} E(\gamma_n)
      \leq \lim_{n \rightarrow \infty} E(\alpha_n) = E(\gamma).
  \end{equation}
  Hence,
  $$
  \lim_{n \rightarrow \infty}E\left(\gamma_n\right)
  = E\left(\gamma\right).
  $$
  By the uniqueness of the local minimizer in the absence of
  conjugate points, it follows that
  \begin{equation}
  \gamma_n\to\gamma\mbox{ as }n\to\infty.
  \end{equation}
  \end{pf}

  If $\mathcal{V}_n$ in theorem \ref{thm_geodesic_like_curve1}
  denote the set of all B\'ezier (resp. B-spline, NURBS) curves of
  degree $n$ on $U$, then the corresponding minimal geodesic-like
  curve is referred to as a minimal B\'ezier (resp. B-spline, NURBS)
  geodesic-like curve.

  \begin{rem}
  \begin{enumerate}
  \item  By the Weierstrass approximation theorem\cite{Marsden}, the set of all B\'ezier curves on $U$ is dense in $C([a,b], U)$. Hence,
      any geodesic on $\mathbf{x}(U)$ can be approximated by a sequence of minimal B\'ezier geodesic-like curves.

  \item Since every B\'ezier curve can be regarded as a special case of a B-spline (respectively, NURBS) curve,
    any geodesic $\mathbf{x}(U)$ can also be approximated by a sequence of minimal B-spline (respectively, NURBS) geodesic-like curves.

    \item Fixed a positive integer $n$, the set of all uniform B-spline curves of degree $n$ on $U$ is also dense in $C([a,b],
    U)$.
    Hence any geodesic on $\mathbf{x}(U)$ can be approximated by a  sequence of minimal uniform B-spline geodesic-like
    curves of degree $n$.
    In fact, a discrete set of points representing a numerical shortest path may be interpreted
    as a degree-1 minimal B-spline geodesic-like curve, where the points serve as control points.

  \end{enumerate}

  \end{rem}

  For further details on geodesic-like curves, readers are referred to Chen.\cite{Chen}
  In this paper, we focus on the uniform B-spline geodesic-like curves on a regular surface
  $\Sigma$,
  using them to approximate the shortest path between two points on $\Sigma$.
   Let $[a,b] \subset \mathbb{R}$ be a parametric interval, and let $\{\mathbf{b}_i\}_{i=0}^{n} \subset U \subset \mathbb{R}^2$
  be a collection of $(n+1)$ control points. Given a clamped uniform knot vector $K=\{t_0,\dots,t_{n+d+1}\}$ on $[a,b]$,
  a clamped uniform B-spline curve $\gamma:[a,b]\to U$ of degree $d$ is defined by \[
  \gamma(t)= \sum_{i=0}^{n} N_{i,d}(t)\mathbf{b}_i, \]
  where $\{N_{i,d}(t)\}_{i=0}^{n}$ denotes the set of B-spline basis functions
  associated with the knot vector $K$.

  To ensure that the curve interpolates the endpoints $\mathbf{b}_0$
and $\mathbf{b}_n$, the knot vector is constructed such that the
first and last $d+1$ knots coincide at $a$ and $b$, respectively:
\begin{equation}\label{eqn_knot_vector}
K = \{\underbrace{a,\dots,a}_{d+1}, t_{d+1},\dots,t_{n},
\underbrace{b,\dots,b}_{d+1}\}.
\end{equation}

The interior knots $\{t_j\}_{j=d+1}^{n}$ are uniformly spaced with
step size $\Delta t = \frac{b-a}{n-d}$.

  \begin{defn}
  Let $\mathcal{V}$ be the set of all B-spline curves of degree $d$ on the interval $[a,b]$ with control points
  in $U \subseteq \mathbb{R}^m$.
  A curve $\gamma \in \mathcal{V}$ is represented as
  \begin{equation}
  \gamma(t) = \sum_{i=0}^{n} \mathbf{b}_i \, N_{i,d}(t), \quad t \in [a,b],
  \end{equation}
  where $\mathbf{b}_i \in U$ are the control points and $N_{i,d}(t)$ denote the B-spline basis functions of degree $d$
  defined on a given knot vector.

  A \emph{$\mathcal{V}$-admissible variation of $\gamma$} is a smooth map
  \begin{equation}
  \phi : [-\epsilon,\epsilon] \times [a,b]\longrightarrow U
  \end{equation}
  such that for each $s \in [-\epsilon,\epsilon]$, the curve $\phi_s(\cdot) := \phi(s,\cdot)$ belongs to $\mathcal{V}$ and
  $\phi_0(t) = \gamma(t)$ for all $t \in [a,b]$. Equivalently,
  \begin{equation}
  \phi(s,t) = \sum_{i=0}^{n} \mathbf{b}_i(s)\, N_{i,d}(t),
  \end{equation}
  where the control points $\mathbf{b}_i(s)$ depend smoothly on the variation parameter $s$.
  \end{defn}

    The energy functional of $\phi(s,t)$ on the surface $\Sigma$ is defined as
\begin{equation}\label{eqn_energy_variation2}
 E(s)= \frac{1}{2} \int_a^b \left\| \frac{\partial}{\partial t}
(\mathbf{x} \circ \phi)(s,t) \right\|^2 dt.
\end{equation}

  Since B-spline curves are completely determined by their control points,
  the energy of the $\mathcal{V}$-admissible variation $\phi(s,t)$ in Equation (\ref{eqn_energy_variation2}) can be expressed
  in terms of the control points $\mathbf{b}_i(s)$. It can be written as
  \begin{equation}
    E(u_0,u_1, \dots, u_n,v_0,v_1, \dots, v_n) =
    \frac{1}{2} \int_a^b \left\| \frac{d}{dt} \mathbf{x}\!\left( \sum_{i=0}^{n} \mathbf{b}_i
    N_{i,d}(t)\right) \right\|^2 dt,
  \end{equation}
  where $\mathbf{b}_i = (u_i,v_i) \in U $ for each $i=0,1, \dots, n$ denotes the control point of the  B-spline curve.

  \begin{defn}\label{BsplineGeodesicPQ}
  Let $\mathbf{x}(u,v)$ be a parametrization of a parametric surface
  $\Sigma$ where $\mathbf{x} : U \subset \mathbb{R}^2 \rightarrow \Sigma$.
  We define the space of B-spline curve of degree $d$ on the clamped knot vector $K$
  in equation~\eqref{eqn_knot_vector}, with control points in $U$ as
  \begin{equation}
  \mathcal{V} = \left\{ \gamma(t) = \sum_{i=0}^n N_{i,d}(t)(u_i,v_i) \;\middle|\; (u_i,v_i) \in U \right\}.
  \end{equation}

  Given two points $p,q \in U$, a curve $\tilde{\gamma}(t)\subset U$ is
  called a B-spline geodesic-like curve of degree $d$ on $\Sigma$
  connecting $p$ and $q$ if
  \begin{equation}
  \tilde{\gamma} \in \mathcal{V}, \qquad
  \tilde{\gamma}(a)=p,\; \tilde{\gamma}(b)=q
  \end{equation}
  and satisfies the optimality condition
  \begin{equation}
  \nabla E(\mathbf{b}) = 0,
  \end{equation}
  where
  $\mathbf{b}=(u_1,\dots,u_{n-1},v_1,\dots,v_{n-1})$
  denotes the vector of free control points, and
  \begin{equation}\label{energyofBsplineGeodesic_likePQ}
E(\gamma)= \frac{1}{2} \int_a^b \left\| \frac{d}{dt}
\mathbf{x}(\gamma(t)) \right\|^2 dt
\end{equation}
  is the energy functional of the curve $\gamma\in {\mathcal V}$
  satisfying $\gamma(a)={\bf p}$ and $\gamma(b)={\bf q}$.
  \end{defn}

  It is clear that a minimal B-spline geodesic-like curve is also a B-spline geodesic-like curve.

  Since the derivative of the B-spline curve $\gamma (t)$ is
  itself a B-spline curve of degree $d-1$, it can be expressed as
  $$
  \gamma'(t) = \sum_{i=0}^n N_{i,d}'(t) \mathbf{b}_i
  = \sum_{i=0}^{n-1} N_{i, d-1}(t) \Delta \mathbf{b}_i$$
  where the derivative of the basis functions satisfies
  $$ N_{i,d}'(t) = \frac{d}{t_{i+d} - t_i} N_{i,d-1}(t)
  - \frac{d}{t_{i+d+1} - t_{i+1}}  N_{i+1,d-1}(t)$$
  and the first-order control differences
  $\Delta \mathbf{b}_i \in \mathbb{R}^m$ are given by
  $$\Delta \mathbf{b}_i = \frac{d}{t_{i+d+1} - t_{i+1}}
  (\mathbf{b}_{i+1} - \mathbf{b}_i).$$

  We present the energy function $E$ in equation
  (\ref{energyofBsplineGeodesic_likePQ}), together with
  its gradient and Hessian.
  Let $\mathbf{x}:U \subset \mathbb{R}^2 \rightarrow \mathbb{R}^3$
  be a parametrization of a parametric surface $\Sigma$, and let
  $\gamma(t) = \sum_{i=0}^n N_{i,d}(t)(u_i,v_i)$ be a B-spline
  curve on $U$.
  For simplicity, we write $\mathbf{x}$ to denote
  $\mathbf{x}(\gamma(t))$ and
  $\gamma$ denote $\gamma(t)$. The notation $(\cdot)'$
  denotes differentiation
  with respect to the parameter $t$.

  The energy of $\gamma$ is defined as the function
  \begin{equation}
    \begin{aligned}
  E(u_1,\dots,u_{n-1},v_1,\dots,v_{n-1})
  &= \frac{1}{2}\int_a^b \langle \mathbf{x}' , \mathbf{x}' \rangle
  dt\\
     &= \frac{1}{2} \int_a^b \langle \mathbf{x}_u u' + \mathbf{x}_v v', \mathbf{x}_u u' + \mathbf{x}_v v' \rangle  dt.
  \end{aligned}
  \end{equation}
  Here, we set $\gamma(t) = (u(t),v(t))$ where
  $$u(t) = \sum_{i=0}^n N_{i,d}(t)u_i,~
  v(t) = \sum_{i=0}^n N_{i,d}(t)v_i.$$
  Since
  \begin{equation}
    \frac{\partial u(t)}{\partial u_i} = N_{i,d}(t), \quad
    \frac{\partial u(t)}{\partial v_i} = 0, \quad \frac{\partial
    v(t)}{\partial u_i} = 0, \quad \frac{\partial v(t)}{\partial v_i} =
    N_{i,d}(t),
  \end{equation}
  it follows from the chain rule that
  \begin{equation}\label{xtu_i}
  \frac{\partial \mathbf{x}'}{\partial u_i} =
  (\mathbf{x}_{uu}u' + \mathbf{x}_{uv}v')N_{i,d}(t) + \mathbf{x}_u N_{i,d}'(t),
  \end{equation}
  and
  \begin{equation}\label{xtv_i}
  \frac{\partial \mathbf{x}'}{\partial v_i}  =
  (\mathbf{x}_{uv}u' + \mathbf{x}_{vv}v')N_{i,d}(t) + \mathbf{x}_v N_{i,d}'(t).
  \end{equation}
  Hence, we obtain
\begin{equation}
\begin{aligned}
    E_{u_i} &= \int_a^b \left\langle (\mathbf{x}_{uu}u' +
\mathbf{x}_{uv}v')N_{i,d}(t)+ \mathbf{x}_u N'_{i,d}(t),
\mathbf{x}_u u' + \mathbf{x}_v v' \right\rangle dt, \\
    E{v_i} &= \int_a^b \left\langle (\mathbf{x}_{uv}u' +
\mathbf{x}_{vv}v')N_{i,d}(t)+ \mathbf{x}_v N'_{i,d}(t), \mathbf{x}_u
u' + \mathbf{x}_v v' \right\rangle dt.
\end{aligned}
\end{equation}
  The gradient of the energy functional is given by
  \begin{equation}
  \nabla E = (E_{u_1},\dots,E_{u_{n-1}},E_{v_1},\dots,E_{v_{n-1}}).
  \end{equation}
  Therefore, a B-spline geodesic-like curve is obtained as a critical
point of $E$, i.e., it satisfies
\begin{equation}
\begin{aligned}
&\int_a^b \left\langle (\mathbf{x}_{uu}u' +
\mathbf{x}_{uv}v')N_{i,d}(t)+ \mathbf{x}_u N'_{i,d}(t), \mathbf{x}_u
u'+ \mathbf{x}_v v' \right\rangle dt = 0, \\
&\int_a^b \left\langle (\mathbf{x}_{uv}u' +
\mathbf{x}_{vv}v')N_{i,d}(t)+\mathbf{x}_v N'_{i,d}(t), \mathbf{x}_u
u' + \mathbf{x}_v v' \right\rangle dt = 0,
\end{aligned}
\end{equation}
for $i=1,\ldots,n-1$.

  Similarly, the entries of the Hessian matrix of the energy functional are given by
\begin{equation}
\begin{aligned}
E_{u_i u_j} &= \int_a^b \left( \left\langle \frac{\partial^2
\mathbf{x}'}{\partial u_i \partial u_j}, \mathbf{x}' \right\rangle +
\left\langle \frac{\partial \mathbf{x}'}{\partial u_i},
\frac{\partial \mathbf{x}'}{\partial u_j} \right\rangle \right)
dt,\\
E_{u_i v_j} &= \int_a^b \left( \left\langle \frac{\partial^2
\mathbf{x}'}{\partial u_i \partial v_j}, \mathbf{x}' \right\rangle +
\left\langle \frac{\partial \mathbf{x}'}{\partial u_i},
\frac{\partial \mathbf{x}'}{\partial v_j} \right\rangle \right)
dt,\\
E_{v_i v_j} &= \int_a^b \left( \left\langle \frac{\partial^2
\mathbf{x}'}{\partial v_i \partial v_j}, \mathbf{x}' \right\rangle +
\left\langle \frac{\partial \mathbf{x}'}{\partial v_i},
\frac{\partial \mathbf{x}'}{\partial v_j} \right\rangle \right) dt.
\end{aligned}
\end{equation}
  Consequently, the Hessian of E can be written in block matrix form
as
  \begin{equation}
  \begin{pmatrix}
  E_{uu} & E_{uv} \\
  E_{vu} & E_{vv}
  \end{pmatrix},
  \end{equation}
  where $E_{uu} = (E_{u_i u_j})$, $E_{uv}=E_{vu}=(E_{u_i v_j})$, and $E_{vv}=(E_{v_i v_j})$.

  Since the energy functional $E$, together with its gradient and Hessian, are smooth functions
  of the B-spline control point variables,
  the geodesic computation reduces to a finite-dimensional nonlinear optimization problem that
  can be efficiently solved using Newton-type or quasi-Newton methods~\cite{Chen}.

Although geodesic-like curves provide an elegant framework for
approximating geodesic boundary value problems, their construction
remains computationally challenging. Classical numerical schemes,
such as Newton-type methods, can in principle be applied to solve
the resulting highly nonlinear optimality conditions associated with
the energy functional; however, these methods are often
computationally expensive and may suffer from numerical instability
in practice.


\section{Neural Network Frameworks for Minimal B-spline Geodesic-like Curves}

Since deep learning provides a robust alternative that alleviates issues related to computational instability and 
initialization dependence, we develop a neural network-based framework for computing minimal B-spline geodesic-like 
curves on parametric surfaces.
   
Let $\Sigma$ be a regular surface given by a parametrization $\mathbf{x}(u, v): U \to \mathbb{R}^3$, where the 
parameter domain is assumed to be a rectangular region $U = (a, b) \times (c, d) \subseteq \mathbb{R}^2$.To find the 
minimal B-spline geodesic-like curve connecting two points $p$ and $q$ on $\Sigma$, we construct a deep fully connected 
neural network $\mathcal{N}_\theta$. The objective is to predict a set of interior control points 
$\mathbf{c} = [u_1, \dots, u_{n-1}, v_1, \dots, v_{n-1}]^T$ such that the energy functional $E$ of the 
corresponding B-spline curve is minimized.

\begin{figure}
{\centering
\includegraphics[width=\textwidth]{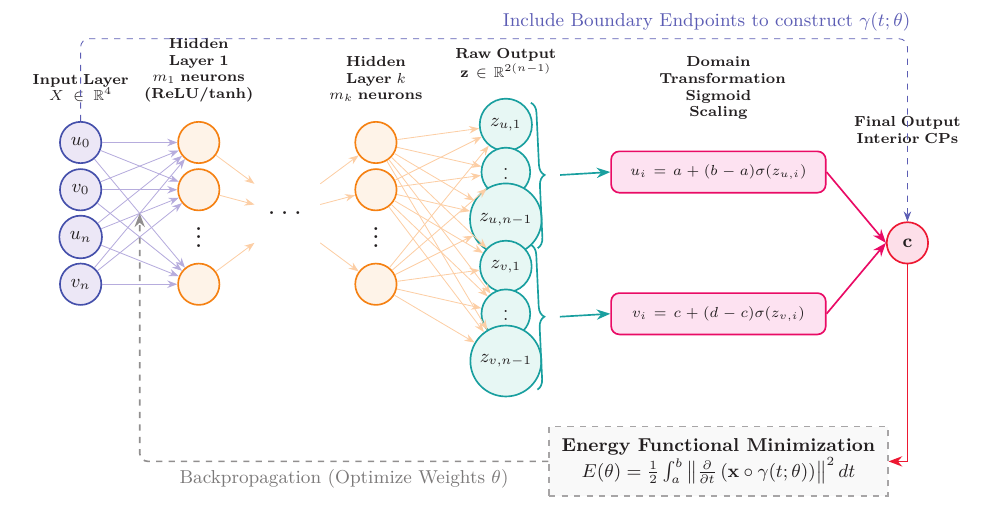}
\caption{The architecture of the point-specific optimization model.}\label{fig_NN_model1} 
}
\end{figure}

The feed-forward structure of the network, the Basic Model,  is defined as follows, see Figure~\ref{fig_NN_model1} 
for an illustration:

\begin{enumerate}
    \item Input Layer: 
    The input vector $X \in \mathbb{R}^4$ consists of the coordinates 
    of the two endpoints in the parameter domain $U$:
    \begin{equation}
        X = [u_0, v_0, u_n, v_n]^T.
    \end{equation}

    \item Sequence of Deep Hidden Layers: 
    The network comprises $k$ hidden layers to extract high-dimensional 
    non-linear geometric features. The architecture is defined as:
    \begin{itemize}
        \item Hidden Layer 1: Contains $m_1$ neurons with 
        ReLU or tanh activation functions.
        \item Hidden Layer 2: Contains $m_2$ neurons.
        \item Intermediate Layers ($\dots$): Represents 
        additional hidden layers as required.
        \item Hidden Layer $k$: Contains $m_k$ neurons, 
        serving as the final feature extraction layer.
    \end{itemize}

    \item Raw Output Layer: 
    The network generates raw predicted values $\mathbf{z} \in 
    \mathbb{R}^{2(n-1)}$:
    \begin{equation}
        \mathbf{z} = [z_{u,1}, \dots, z_{u,n-1}, z_{v,1}, 
        \dots, z_{v,n-1}]^T.
    \end{equation}

    \item Domain Transformation: 
    To ensure the predicted control points strictly reside within 
    the domain $U = (a, b) \times (c, d)$, a Sigmoid-based 
    scaling, $\sigma(x) = \frac{1}{1 + e^{-x}}$, is applied:
    \begin{equation}
    \left\{
    \begin{array}{ll}
        u_i = a + (b - a)\sigma(z_{u,i}), \\
        v_i = c + (d - c)\sigma(z_{v,i}),
    \end{array}
    \right.
    \end{equation}
    for each $i = 1, \dots, n-1$.

    \item Final Output: 
    The final output vector $\mathbf{c}$ defines the interior 
    control points used to construct the variational B-spline 
    curve $\gamma(t; \theta)$.
\end{enumerate}

The entire network is optimized by minimizing the energy 
functional $E(\theta)$:
\begin{equation}
    E(\theta) = \frac{1}{2} \int_a^b \left\| \frac{\partial }{\partial t} 
    \left( \mathbf{x} \circ \gamma(t; \theta) \right) 
    \right\|^2 dt.
\end{equation}

By minimizing $E$ via backpropagation, the network parameters 
$\theta$ are optimized to yield the optimal control points 
$\mathbf{c}^* = \mathcal{N}_{\theta^*}(X)$, defining the 
minimal energy B-spline geodesic-like curve on $\Sigma$.

\begin{figure}
\centering
\includegraphics[width=\textwidth]{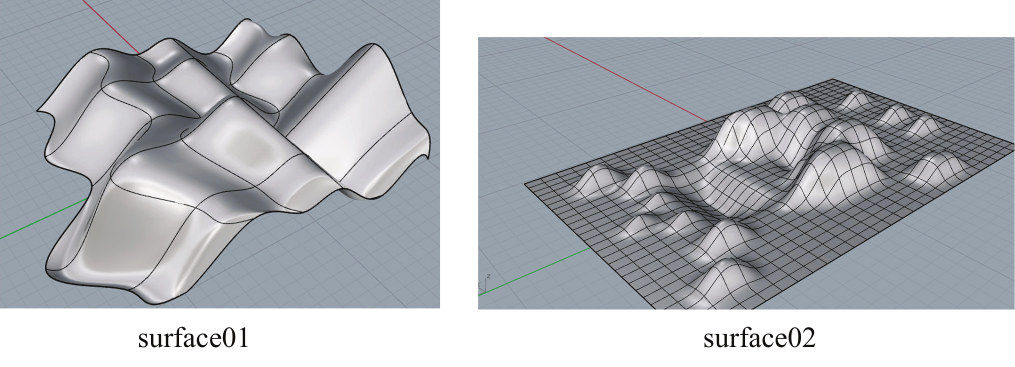}
\caption{Two Bspline surfaces of degree $3 \times 3$. }\label{fig_surface0102}
\end{figure}

\begin{figure}
\centering
\includegraphics[width=\textwidth]{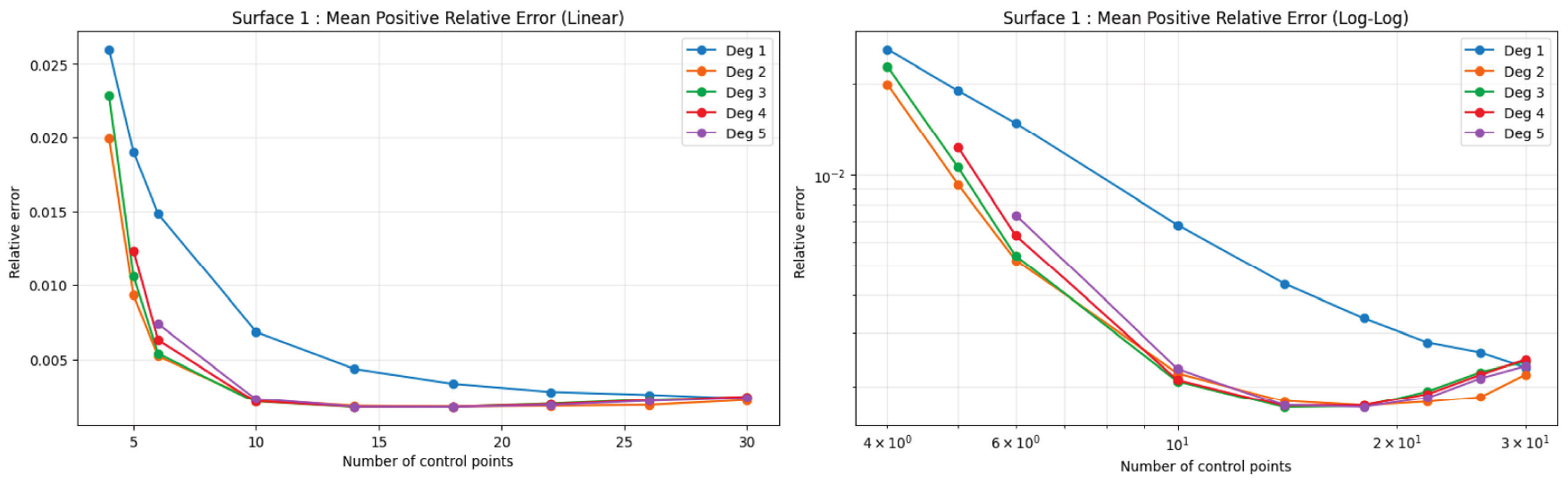}
\caption{Mean relative errors of geodesic-like curves on surface .}\label{fig_rel_err_mean_surface1}
\end{figure}

\begin{figure}
\centering
\includegraphics[width=\textwidth]{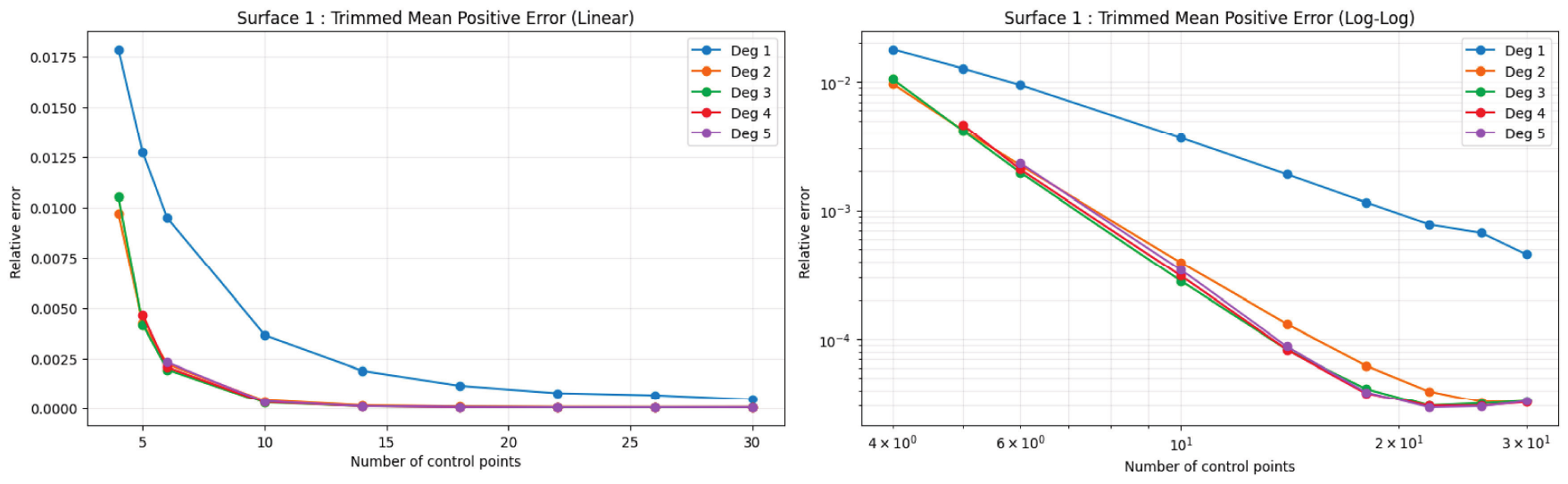}
\caption{Trimmed mean relative errors( $ 5\% - 95 \% $ ) of geodesic-like curves on surface 1.}\label{fig_rel_err_trimmed_mean_surface1}
\end{figure}

\begin{figure}
\centering
\includegraphics[width=\textwidth]{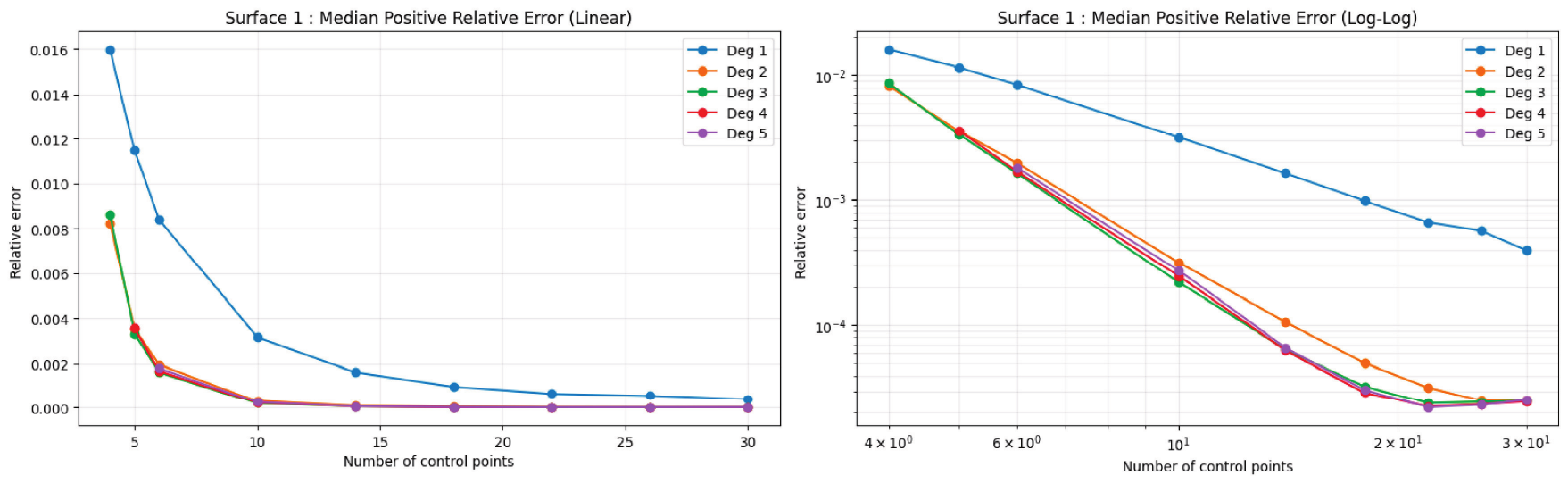}
\caption{Median relative errors of geodesic-like curves on surface 1.}\label{fig_rel_err_median_surface1}
\end{figure}

\begin{table}[htbp]
\centering
\scriptsize
\tabcolsep=4pt
\caption{Mean relative errors on surface 1}\label{table_rel_err_mean_surface1}
\begin{tabular}{|c||c|c|c|c|c|c|c|c|c|}
\hline
Degree & \multicolumn{9}{c|}{Numbers of Control Points} \\ \cline{2-10}
 & 4 & 5 & 6 & 10 & 14 & 18 & 22 & 26 & 30 \\ \hline
1 & 2.60e-02 & 1.90e-02 & 1.48e-02 & 6.83e-03 & 4.36e-03 & 3.36e-03 & 2.80e-03 & 2.60e-03 & 2.32e-03 \\ \hline
2 & 2.00e-02 & 9.28e-03 & 5.22e-03 & 2.22e-03 & 1.79e-03 & 1.72e-03 & 1.78e-03 & 1.85e-03 & 2.20e-03 \\ \hline
3 & 2.29e-02 & 1.06e-02 & 5.39e-03 & 2.09e-03 & 1.70e-03 & 1.71e-03 & 1.93e-03 & 2.24e-03 & 2.42e-03 \\ \hline
4 & - & 1.23e-02 & 6.30e-03 & 2.11e-03 & 1.72e-03 & 1.73e-03 & 1.89e-03 & 2.19e-03 & 2.46e-03 \\ \hline
5 & - & - & 7.37e-03 & 2.29e-03 & 1.73e-03 & 1.70e-03 & 1.84e-03 & 2.13e-03 & 2.34e-03 \\ \hline
\end{tabular}
\end{table}

\begin{table}[htbp]
\centering
\scriptsize
\tabcolsep=4pt
\caption{Trimmed mean relative errors on surface 1}\label{table_rel_err_trimmed_mean_surface1}
\begin{tabular}{|c||c|c|c|c|c|c|c|c|c|}
\hline
Degree & \multicolumn{9}{c|}{Numbers of Control Points} \\ \cline{2-10}
 & 4 & 5 & 6 & 10 & 14 & 18 & 22 & 26 & 30 \\ \hline
1 & 1.79e-02 & 1.27e-02 & 9.48e-03 & 3.65e-03 & 1.90e-03 & 1.15e-03 & 7.80e-04 & 6.69e-04 & 4.54e-04 \\ \hline
2 & 9.69e-03 & 4.23e-03 & 2.23e-03 & 3.90e-04 & 1.31e-04 & 6.30e-05 & 3.90e-05 & 3.20e-05 & 3.20e-05 \\ \hline
3 & 1.05e-02 & 4.17e-03 & 1.96e-03 & 2.81e-04 & 8.30e-05 & 4.10e-05 & 3.00e-05 & 3.10e-05 & 3.30e-05 \\ \hline
4 & - & 4.62e-03 & 2.07e-03 & 3.08e-04 & 8.30e-05 & 3.80e-05 & 3.00e-05 & 3.00e-05 & 3.20e-05 \\ \hline
5 & - & - & 2.32e-03 & 3.43e-04 & 8.80e-05 & 3.90e-05 & 2.90e-05 & 3.00e-05 & 3.30e-05 \\ \hline
\end{tabular}
\end{table}

\begin{table}[htbp]
\centering
\scriptsize
\tabcolsep=4pt
\caption{Median relative errors on surface 1}\label{table_rel_err_median_surface1}
\begin{tabular}{|c||c|c|c|c|c|c|c|c|c|}
\hline
Degree & \multicolumn{9}{c|}{Numbers of Control Points} \\ \cline{2-10}
 & 4 & 5 & 6 & 10 & 14 & 18 & 22 & 26 & 30 \\ \hline
1 & 1.60e-02 & 1.15e-02 & 8.40e-03 & 3.17e-03 & 1.63e-03 & 9.81e-04 & 6.64e-04 & 5.69e-04 & 3.96e-04 \\ \hline
2 & 8.24e-03 & 3.54e-03 & 1.97e-03 & 3.15e-04 & 1.06e-04 & 5.00e-05 & 3.20e-05 & 2.50e-05 & 2.50e-05 \\ \hline
3 & 8.63e-03 & 3.32e-03 & 1.63e-03 & 2.19e-04 & 6.50e-05 & 3.30e-05 & 2.40e-05 & 2.40e-05 & 2.50e-05 \\ \hline
4 & - & 3.57e-03 & 1.68e-03 & 2.46e-04 & 6.30e-05 & 2.90e-05 & 2.20e-05 & 2.30e-05 & 2.40e-05 \\ \hline
5 & - & - & 1.80e-03 & 2.72e-04 & 6.70e-05 & 3.10e-05 & 2.20e-05 & 2.30e-05 & 2.50e-05 \\ \hline
\end{tabular}
\end{table}

\begin{figure}
\centering
\includegraphics[width=\textwidth]{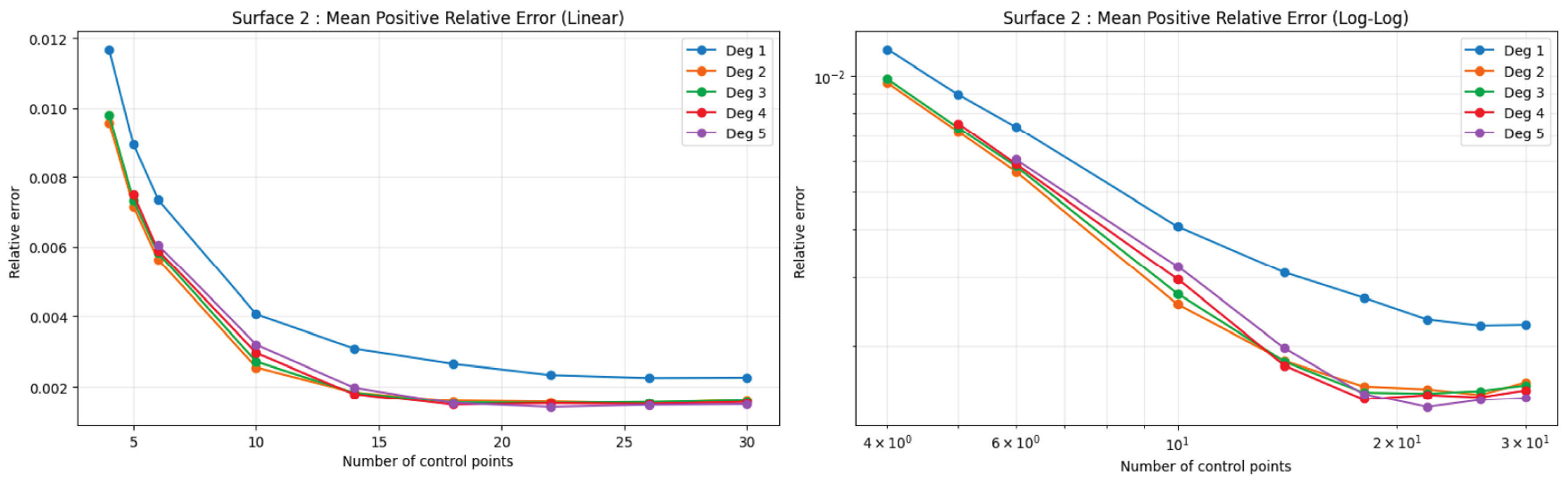}
\caption{Mean relative errors of geodesic-like curves on surface 2.}\label{fig_rel_err_mean_surface2}
\end{figure}

\begin{figure}
\centering
\includegraphics[width=\textwidth]{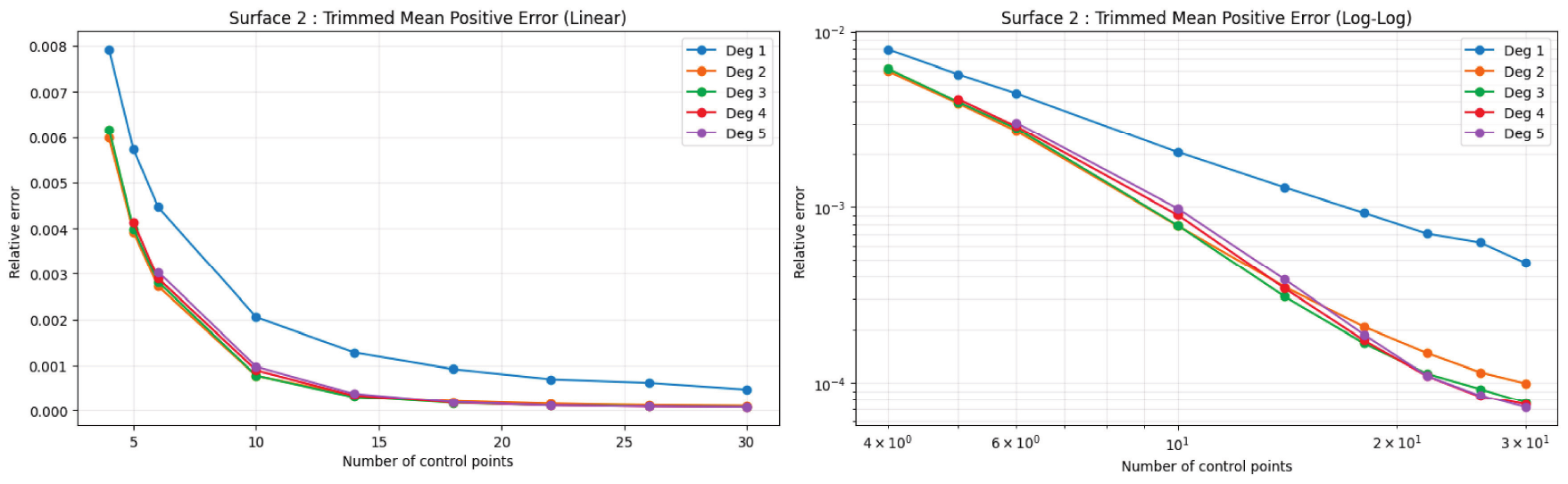}
\caption{Trimmed mean relative errors( $ 5\% - 95 \% $ ) of geodesic-like curves on surface 2.}\label{fig_rel_err_trimmed_mean_surface2}
\end{figure}

\begin{figure}
\centering
\includegraphics[width=\textwidth]{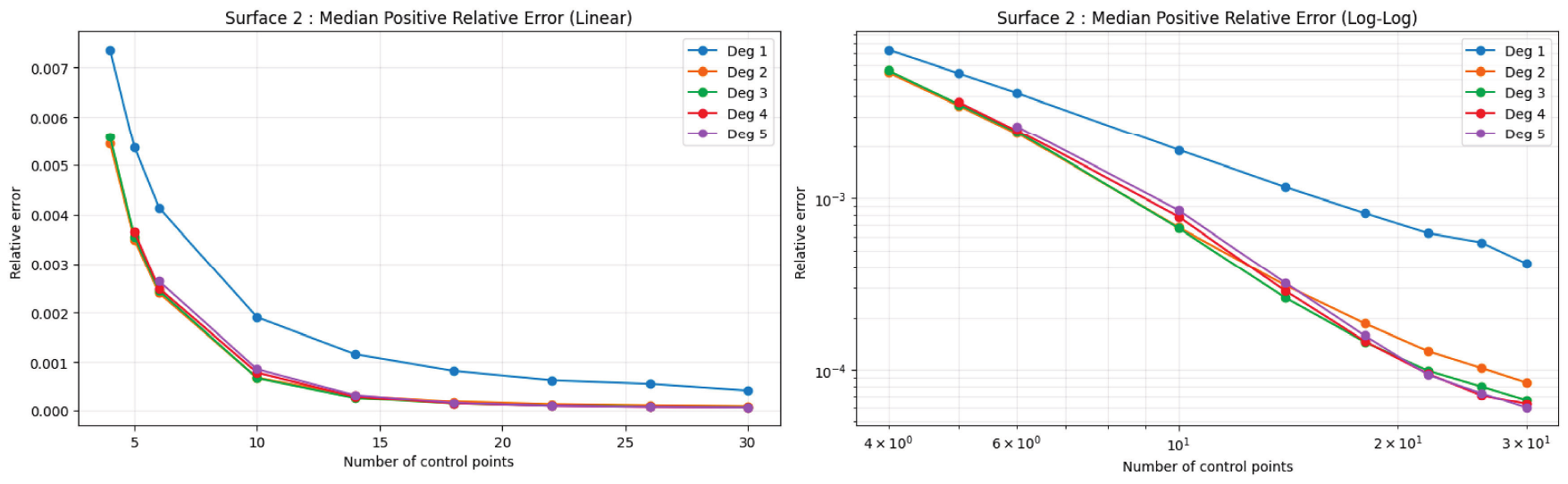}
\caption{Median relative errors of geodesic-like curves on surface 2..}\label{fig_rel_err_median_surface2}
\end{figure}

\begin{table}[htbp]
\centering
\scriptsize
\tabcolsep=4pt
\caption{Mean  relative errors on surface 2}\label{table_rel_err_mean_surface2}
\begin{tabular}{|c|c|c|c|c|c|c|c|c|c|}
\hline
Degree & \multicolumn{9}{c|}{Numbers of Control Points} \\ \cline{2-10}
 & 4 & 5 & 6 & 10 & 14 & 18 & 22 & 26 & 30 \\ \hline
1 & 1.17e-02 & 8.94e-03 & 7.36e-03 & 4.06e-03 & 3.09e-03 & 2.65e-03 & 2.33e-03 & 2.25e-03 & 2.26e-03 \\ \hline
2 & 9.58e-03 & 7.16e-03 & 5.63e-03 & 2.55e-03 & 1.83e-03 & 1.57e-03 & 1.54e-03 & 1.49e-03 & 1.61e-03 \\ \hline
3 & 9.80e-03 & 7.33e-03 & 5.82e-03 & 2.72e-03 & 1.82e-03 & 1.51e-03 & 1.50e-03 & 1.53e-03 & 1.58e-03 \\ \hline
4 & -        & 7.51e-03 & 5.89e-03 & 2.96e-03 & 1.78e-03 & 1.45e-03 & 1.49e-03 & 1.47e-03 & 1.53e-03 \\ \hline
5 & -        & -        & 6.05e-03 & 3.20e-03 & 1.97e-03 & 1.50e-03 & 1.38e-03 & 1.44e-03 & 1.46e-03 \\ \hline
\end{tabular}
\end{table}

\begin{table}[htbp]
\centering
\scriptsize
\tabcolsep=4pt
\caption{Trimmed Mean relative error( $ 5\% - 9 \%$)  on surface 2}\label{table_rel_err_trimmed_mean_surface2}
\begin{tabular}{|c|c|c|c|c|c|c|c|c|c|}
\hline
Degree & \multicolumn{9}{c|}{Numbers of Control Points} \\ \cline{2-10}
 & 4 & 5 & 6 & 10 & 14 & 18 & 22 & 26 & 30 \\ \hline
1 & 7.92e-03 & 5.72e-03 & 4.47e-03 & 2.05e-03 & 1.29e-03 & 9.27e-04 & 7.06e-04 & 6.29e-04 & 4.79e-04 \\ \hline
2 & 5.98e-03 & 3.92e-03 & 2.72e-03 & 7.83e-04 & 3.50e-04 & 2.08e-04 & 1.48e-04 & 1.15e-04 & 1.00e-04 \\ \hline
3 & 6.17e-03 & 3.99e-03 & 2.82e-03 & 7.88e-04 & 3.08e-04 & 1.68e-04 & 1.13e-04 & 9.20e-05 & 7.70e-05 \\ \hline
4 & -        & 4.14e-03 & 2.90e-03 & 9.03e-04 & 3.43e-04 & 1.74e-04 & 1.09e-04 & 8.30e-05 & 7.50e-05 \\ \hline
5 & -        & -        & 3.04e-03 & 9.83e-04 & 3.86e-04 & 1.88e-04 & 1.10e-04 & 8.50e-05 & 7.20e-05 \\ \hline
\end{tabular}
\end{table}

\begin{table}[htbp]
\centering
\scriptsize
\tabcolsep=4pt
\caption{Median  relative errors on surface 2}\label{table_rel_err_median_surface2}
\begin{tabular}{|c|c|c|c|c|c|c|c|c|c|}
\hline
Degree & \multicolumn{9}{c|}{Numbers of Control Points} \\ \cline{2-10}
 & 4 & 5 & 6 & 10 & 14 & 18 & 22 & 26 & 30 \\ \hline
1 & 7.36e-03 & 5.36e-03 & 4.14e-03 & 1.91e-03 & 1.16e-03 & 8.21e-04 & 6.29e-04 & 5.56e-04 & 4.19e-04 \\ \hline
2 & 5.44e-03 & 3.49e-03 & 2.40e-03 & 6.80e-04 & 3.12e-04 & 1.87e-04 & 1.29e-04 & 1.03e-04 & 8.50e-05 \\ \hline
3 & 5.58e-03 & 3.55e-03 & 2.45e-03 & 6.73e-04 & 2.64e-04 & 1.46e-04 & 9.90e-05 & 8.00e-05 & 6.70e-05 \\ \hline
4 & -        & 3.66e-03 & 2.50e-03 & 7.84e-04 & 2.90e-04 & 1.47e-04 & 9.50e-05 & 7.10e-05 & 6.30e-05 \\ \hline
5 & -        & -        & 2.64e-03 & 8.54e-04 & 3.23e-04 & 1.59e-04 & 9.40e-05 & 7.30e-05 & 6.00e-05 \\ \hline
\end{tabular}
\end{table}

We compute the numerically minimal B-spline geodesic-like curves on  two different surfaces (see Figure \ref{fig_surface0102}) with
$5,000$ random pairs of endpoints on the domains of these two surfaces. For the following simulations, we apply a specific neural 
network model which consists of two hidden layers with 32 and 64 neurons and  set the training tolerance
 to $1.0 \times 10^{-4}$.

Since analytical solutions for general geodesics on arbitrary surfaces are typically unavailable, 
we establish a benchmark for performance evaluation by synthesizing four independent numerical approaches. 
To determine the reference ground-truth length ($L_{\text{exact}}$), we compute the geodesic length using each of the following 
methods  and adopt the smallest value among them as the best available approximation to the true geodesic length:
\begin{enumerate}
    \item The numerical method proposed by Zhang \cite{Zhang1}; 
    \item The fourth-order Runge-Kutta (RK4) method utilizing an extremely fine step size; 
    \item The built-in shortest path computation provided by the \texttt{Rhino3D 8} software;
    \item The geodesic optimization tools available within the \texttt{Grasshopper} environment.
\end{enumerate}

We define the relative error as:
\begin{equation}
    \text{rel\_err} = \frac{L_{\text{approx}} - L_{\text{exact}}}{L_{\text{exact}}},
\end{equation}
where $L_{\text{approx}}$ and $L_{\text{exact}}$ denote the approximated and reference lengths, respectively. 
Since the reference lengths obtained from existing methods may not coincide with the absolute shortest paths, 
we do not take the absolute value of the relative error. 
A negative relative error ($\text{rel\_err} < 0$) indicates that our approach achieves a shorter path length than 
the baseline methods, and thus our analysis focuses primarily on the comparison of positive errors.

Figures \ref{fig_rel_err_mean_surface1}-\ref{fig_rel_err_median_surface1} and Tables 
\ref{table_rel_err_mean_surface1}-\ref{table_rel_err_median_surface1} present the mean, trimmed mean ($5\%-95\%$), 
and median relative errors for the estimated results on surface 1. Corresponding results for surface 2 are present in 
Figures \ref{fig_rel_err_mean_surface2}-\ref{fig_rel_err_median_surface2} and Tables 
\ref{table_rel_err_mean_surface2}-\ref{table_rel_err_median_surface2}.

\begin{figure}
\centering

\includegraphics[width=.45\textwidth]{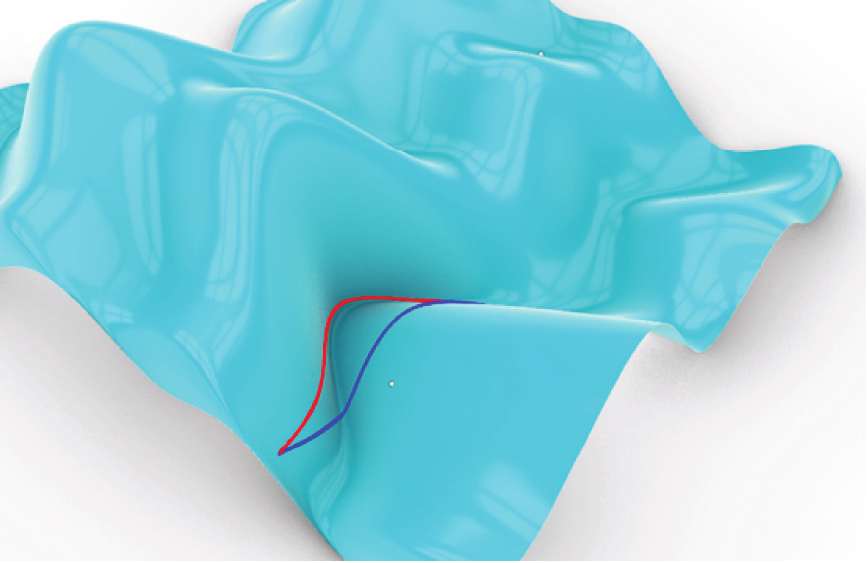}
\includegraphics[width=.45\textwidth]{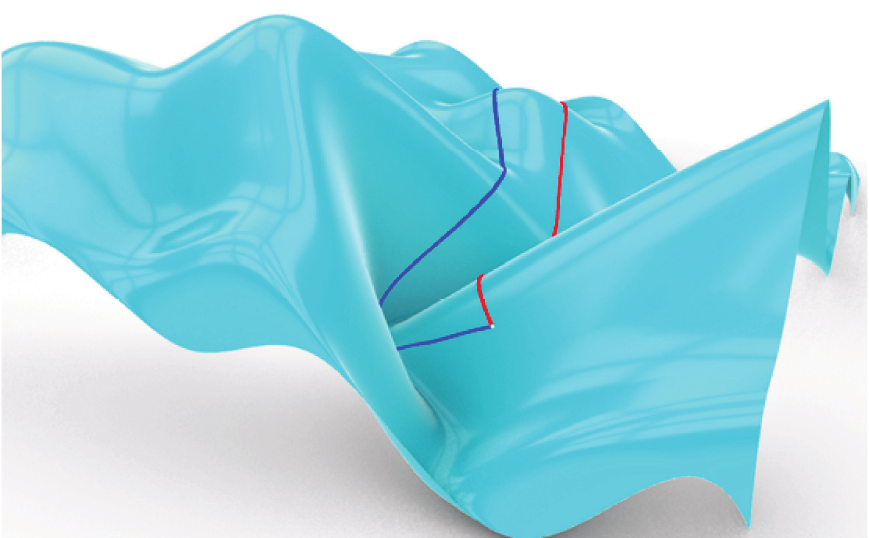}
\caption{Challenges in Global Optimization. Left subfigure is a failure case of our proposed method; 
`right subfigure is a failure case of baseline methods.}\label{fig_global_optimal_problrem}
\end{figure}

\subsection*{Observations on Numerical Performance}

The experimental results provide insights into the effectiveness and
limitations of our approach. In particular, we observe three key
phenomena regarding its performance and convergence behavior.

\begin{enumerate}

    \item \textbf{Convergence and Model Saturation}:

    The relative error consistently decreases as the number of control
    points increases. However, a saturation effect is observed,
    whereby the accuracy ceases to improve beyond a certain threshold, approximately
    $1.0\times 10^{-3}$.
    This phenomenon can be attributed to the limitations of the model architecture.
    Specifically, the model employs two hidden layers with 32 and 64 neurons, respectively, and the training
    tolerance is set to $1.0\times 10^{-4}$. These design choices constrain the expressive
    capacity of the model, such that increasing the degrees of freedom no longer improves
    accuracy once the error reaches this limit.

    \item \textbf{Geometric Efficiency of Basis Functions}:

    Increasing the polynomial degree beyond 2 provides only
    marginal improvement. This behavior is expected because the approximation
    is performed in the parameter plane, where the torsion of plane curves is identically zero.
    Consequently, a quadratic B-spline representation is
    sufficient to capture the geometric properties of curves in the parameter plane.
    Higher polynomial degrees do not yield significant performance gains.
    Therefore, degree 2 provides the best balance between computational complexity and approximation accuracy.

\item \textbf{Challenges in Global Optimization}:

    The difference between the Mean and Trimmed Mean relative errors
    reveals two key challenges in the proposed geodesic computation approach.

    The first challenge is the presence of local minima. Significant
    deviations (relative errors $>0.1$) often occur when the
    optimization process becomes trapped in such minima.

   Figure \ref{fig_global_optimal_problrem} illustrates two representative cases of this phenomenon.
   The red curves represent the B-spline geodesic-like curves generated by the proposed method,
   while the blue curves correspond to the baseline geodesic approximations.
   In the left subfigure, the proposed method converges to a local minimum with a path length of 13.869,
   whereas the baseline method identifies a superior solution with a length of 12.389.
   Conversely, in the right subfigure, the baseline method becomes trapped in a local minimum,
   yielding a path length of 19.532, while the proposed approach identifies a more optimal path with a length of 17.202.
   These examples highlight that both the proposed method and the baseline techniques are susceptible to local minima,
   and neither consistently guarantees convergence to the global optimum.

    The second challenge arises from boundary constraints. When the
    true shortest path lies near the domain boundary, restricting
    control points within the domain prevents the model from
    approaching the global optimum; see Figure \ref{fig_boundary_restict1}.
    This reflects an inherent geometric limitation rather than an optimization failure.

    These two difficulties (on-convex energy landscapes and boundary
    constraints)
    constitute persistent challenges for all numerical approaches,
    thereby motivating the proposed neural network strategy.

\end{enumerate}

\begin{figure}
\centering
\includegraphics[width=.5\textwidth]{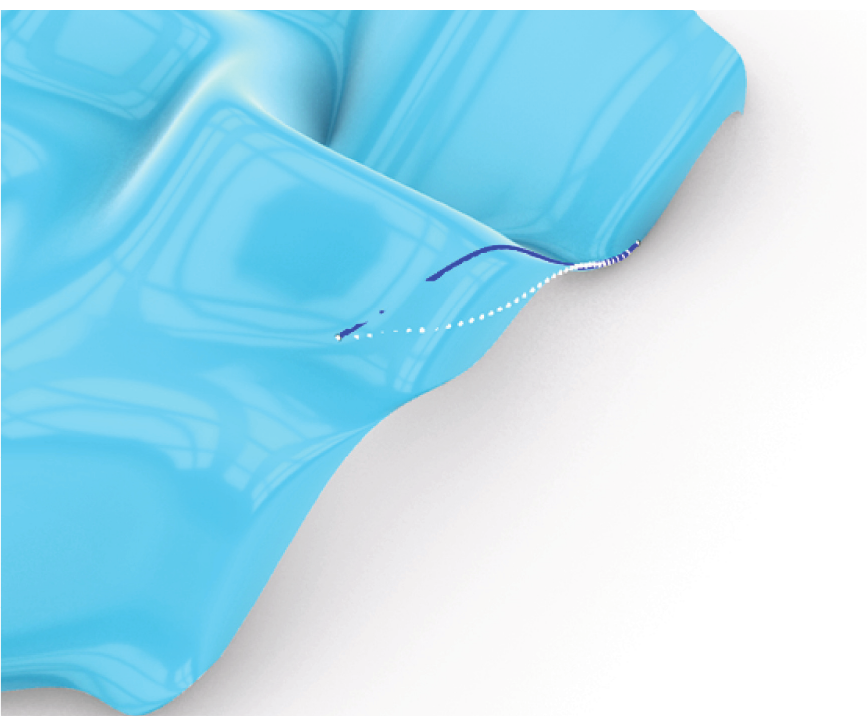}
\caption{Boundary limitation of the domain-restricted method. A portion of the true shortest path (white dashed line) follows
the domain boundary, while our neural-approximated curve (blue line) deviates from this boundary due to the
 domain constraint.}\label{fig_boundary_restict1}
\end{figure}

\begin{figure}
\centering
\includegraphics[width=\textwidth]{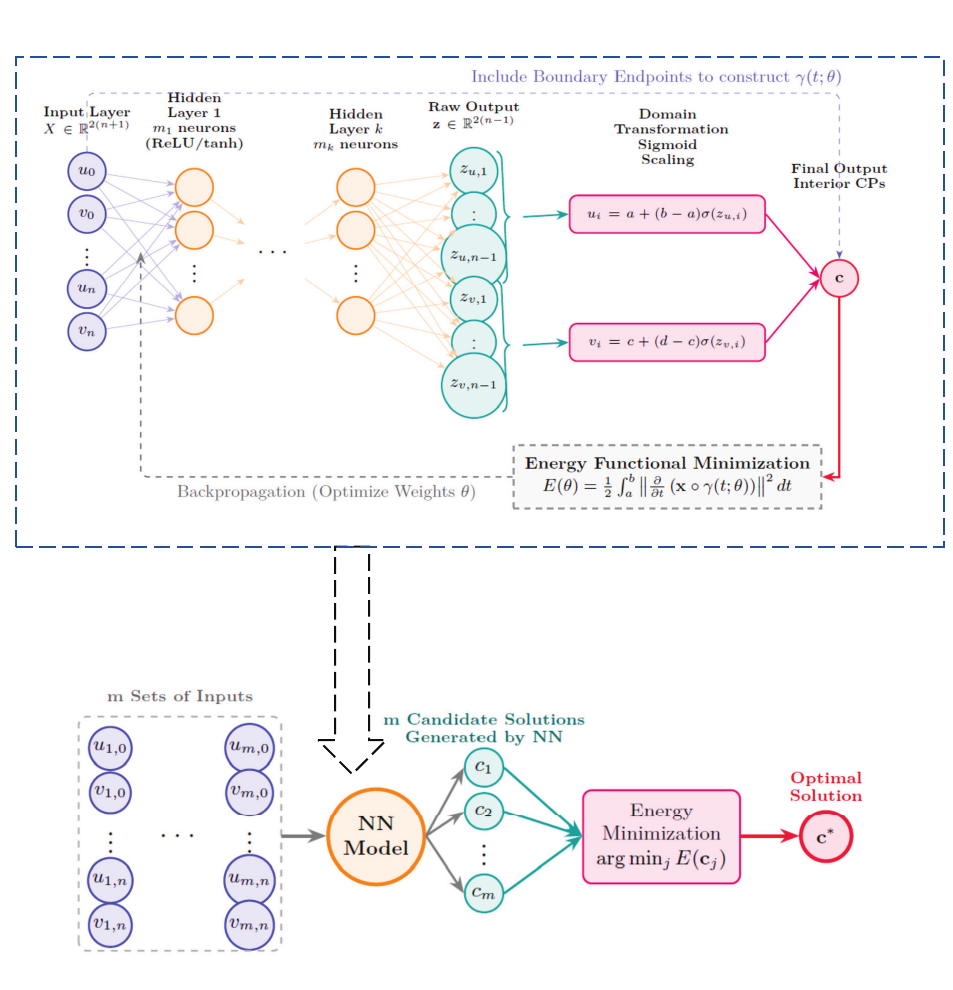}
\caption{Search mechanism of the revised model using $m$ perturbed initial curves to identify the minimal B-spline geodesic-like
 curve.}\label{fig_revised_model}
\end{figure}

To address the issue of becoming trapped in local minima, we adopt a
multi-path search strategy. We modify the input of the basic model
by incorporating all control points of an initial guess curve,
thereby expanding the input dimension to $2(n+1)$ values, rather
than restricting it to the four endpoint coordinates. We refer to
this configuration as the Revised Model.

Inspired by variational principles, we generate initial paths
through linear interpolation between the endpoints to obtain $n-1$
intermediate points, which serve as the control points of a straight
B-spline curve. By introducing controlled perturbations to the
internal control points while keeping the endpoints fixed, we
generate $m$ distinct initial paths. Each set of control points is
subsequently fed into the Revised Model, yielding $m$ candidate
B-spline curves. Finally, a selection mechanism evaluates these
candidates and identifies the one with the minimum energy (path
length) as the estimated shortest path; see Figure
\ref{fig_revised_model}.

 To generate diverse initial paths, we employ a bell-shaped perturbation strategy applied to the internal control
  points of the initial B-spline curve. Let the initial straight path be defined by  $n+1$ control points
  $\mathbf{p}_i = (u_i, v_i)$ for $i=0, \dots, n$. The endpoints $\mathbf{p}_0$ and $\mathbf{p}_n$
  are kept fixed to satisfy the boundary conditions.
   For the internal control points $\mathbf{p}_i$ where $i=1, \dots, n-1$,  a displacement vector
   is applied along the geometric normal of the straight path and modulated by a bell-shaped weight function
   $$w_i = \sin(\frac{i\pi}{n}).$$
   This weighting ensures that the perturbation magnitude vanishes at the boundaries
   and reaches its maximum at the center of the curve.
   By scaling the perturbation with a coefficient $\alpha$,
   we generate $m$ variations of the initial path,
   thereby effectively sampling the parameter space surrounding the straight-line
   trajectory and reducing sensitivity to initialization.

The efficacy of the Revised Model is illustrated in Figure
\ref{fig_local_optim}. The left subfigure presents a scenario in
which the Basic Model, restricted to endpoint inputs, becomes
trapped in a suboptimal local minimum (red curve, length = 14.872),
deviating significantly from the baseline geodesic (blue curve,
length = 14.311). The right subfigure demonstrates the performance
of the Revised Model under the proposed perturbation strategy with
coefficients $\alpha \in \{-0.3, 0.0, 0.3\}$. Specifically, the path
generated with $\alpha=0.0$ (red curve, length = 14.872) yields the
same result as the Basic Model, confirming that the Revised Model
recovers the original prediction when no perturbation is applied.
Furthermore, the path corresponding to $\alpha = 0.3$ (blue curve,
length = 14.311) reproduces the baseline geodesic. Most importantly,
the path corresponding to $\alpha=-0.3$ (green curve, length =
13.476) escapes the local minimum identified by the Basic Model.
Geometrically, this green curve aligns more closely with the true
shortest-path direction on the surface, confirming that the proposed
initialization strategy, combined with multi-path evaluation,
enables the model to navigate the energy landscape more effectively
and converge toward the global shortest geodesic.

\begin{figure}
\centering
\includegraphics[width=.45\textwidth]{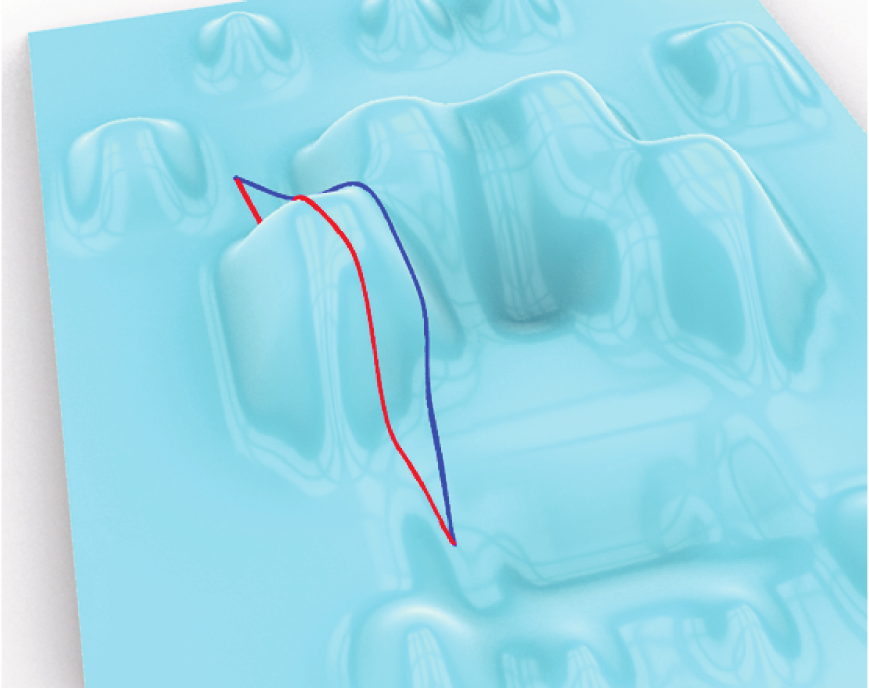}
\includegraphics[width=.45\textwidth]{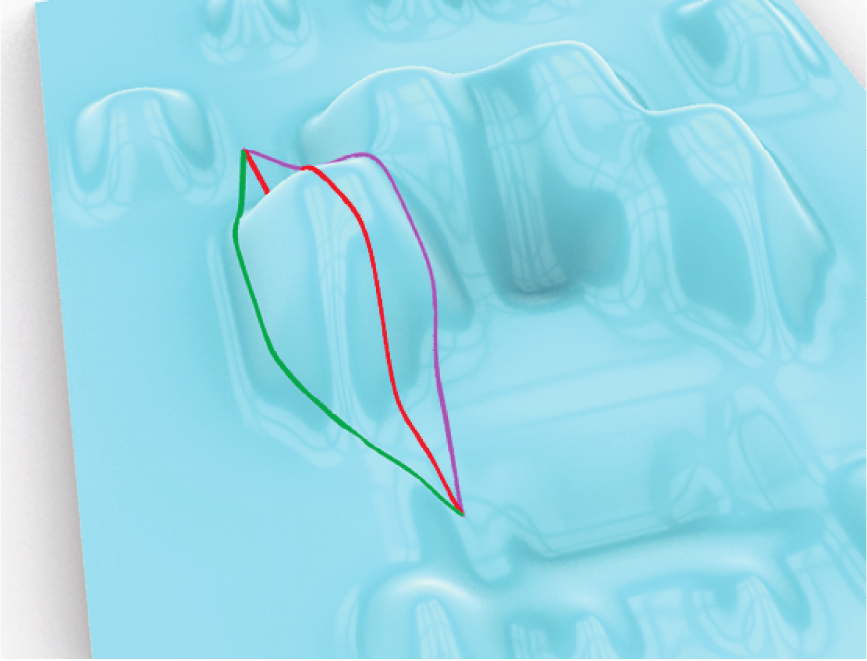}
\caption{Left:Comparison between the approximated geodesic curve derived from the basic model and 
the baseline geodesic. Rithg: Comparison of the revised model with three different perturbation coefficients.}\label{fig_local_optim}
\end{figure}

To address the issue of control points moving outside the parameter
domain, it is necessary to extend the domain range. We consider the
case of two surfaces joined along a common edge with at least $C^0$
continuity. In our framework, surface information is incorporated
only through the loss function, allowing us to focus exclusively on
transformations of the parametric domains. Specifically, a global
parametrization is achieved by mapping the domain of the first
surface to $[0,1]\times[0,1]$, with the shared boundary located at
$u=1$. The domain of the second surface is then mapped to
$[1,2]\times[0,1]$, such that its corresponding shared boundary is
precisely aligned with the same interface at $u=1$. By adjusting the
orientation of the second surface to ensure consistency across the
shared interface, the global domain is extended from $[0,1]\times[0,1]$
to $[0,2]\times[0,1]$. Figure \ref{fig_multisurf01} illustrates an
example of two surfaces connected with $C^1$ continuity. This
approach can also be generalized to accommodate various
configurations in which multiple surfaces are joined along their
boundaries.

To establish a unified framework for geometric stitching across
multiple surface patches, we introduce a coordinate mapping function
 $\phi:\Omega\subset \mathbb{R}^2 \to [0,1]\times [0,1]$. The proposed mapping effectively handles
discrepancies in domain dimensions, vertex alignments, and chiral
orientations among adjacent parametric domains.

Let the initial domain $\Omega \subset \mathbb{R}^2$ be a rectangle defined by its four vertices
in counter-clockwise order: $\mathbf{q}_1(x_1, y_1)$, $\mathbf{q}_2(x_2, y_1)$, $\mathbf{q}_3(x_2, y_2)$, and $\mathbf{q}_4(x_1, y_2)$.
The target domain is the standard unit square $\hat{\Omega} = [0,1]\times[0,1]$ with vertices
$\mathbf{p}_1(0,0)$, $\mathbf{p}_2(1,0)$, $\mathbf{p}_3(1,1)$, and $\mathbf{p}_4(0,1)$.

To flexibly enforce the boundary constraint $\phi(\mathbf{q}_1) = \mathbf{p}_i$ for $i \in \{1, 2, 3, 4\}$ while accommodating
both orientation-preserving and orientation-reversing parameterizations, we parameterize the mapping $\phi$
via homogeneous coordinates:

To flexibly enforce the boundary constraint $\phi(\mathbf{q}_1) = \mathbf{p}_i$ for $i
\in \{1, 2, 3, 4\}$, while accommodating both orientation-preserving
and orientation-reversing parameterizations, we parameterize the
mapping $\phi$ using homogeneous coordinates:
\begin{equation}
\begin{pmatrix} u \cr v \cr 1 \end{pmatrix} = M_{\text{total}}(i, \sigma) \begin{pmatrix} x \cr y \cr 1 \end{pmatrix}
\end{equation}

where $(x,y) \in \Omega$ represents coordinates in the original
domain, and $(u,v) \in \hat{\Omega}$ denotes the transformed
coordinates in the computational domain. The overall transformation
matrix $M_{\text{total}} \in \mathbb{R}^{3 \times 3}$ is elegantly
constructed through the factorization of three discrete geometric
operators:

\begin{equation}
M_{\text{total}}(i, \sigma) = R(\theta_i) \circ F(\sigma) \circ
M_{\text{norm}}
\end{equation}
The constituent matrices are defined as follows:

\paragraph*{1. Domain Normalization Matrix ($M_{\text{norm}}$):}
This operator maps the arbitrary rectangle $\Omega$ onto the standard unit square $\hat{\Omega}$
via scaling and translation, aligning $\mathbf{q}_1$ directly with $\mathbf{p}_1$:

This operator maps the arbitrary rectangular domain $\Omega$ onto
the standard unit square $\hat{\Omega}$ via scaling and translation,
aligning $\mathbf{q}_1$ directly with $\mathbf{p}_1$:
\begin{equation}
M_{\text{norm}} =
\begin{pmatrix} \frac{1}{x_2 - x_1} & 0 & -\frac{x_1}{x_2 - x_1} \cr
 0 & \frac{1}{y_2 - y_1} & -\frac{y_1}{y_2 - y_1} \cr 0 & 0 & 1 \end{pmatrix}
\end{equation}

\paragraph*{2. Chiral Reflection Operator ($F(\sigma)$):}
To reconcile opposite vertex-wrapping sequences (chiral orders)
between adjacent patches, the binary parameter $\sigma \in \{1,-1\}$
activates either an identity mapping or a coordinate permutation
across the diagonal $u=v$:
\begin{equation}
F(\sigma) =
\begin{pmatrix} \frac{1+\sigma}{2} & \frac{1-\sigma}{2} & 0 \cr
\frac{1-\sigma}{2} & \frac{1+\sigma}{2} & 0 \cr 0 & 0 & 1 \end{pmatrix}
\end{equation}

\paragraph*{3. Discrete Center-Rotation Matrix ($R(\theta_i)$):}
Finally, to map the primary anchor point $\mathbf{q}_1$ from $\mathbf{p}_1$ to any
designated target corner $\mathbf{p}_i$, a discrete rotation about the center
of the unit square $(0.5,0.5)$ is applied:
\begin{equation}
R(\theta_i) =
\begin{pmatrix} \cos\theta_i & -\sin\theta_i & \frac{1}{2}(1-\cos\theta_i+\sin\theta_i) \cr
\sin\theta_i & \cos\theta_i & \frac{1}{2}(1-\cos\theta_i-\sin\theta_i) \cr 0 & 0 & 1 \end{pmatrix}
\end{equation}
The rotation angle $\theta_i$ is determined strictly by the target vertex index $i$:
\begin{equation}
\theta_i = \frac{\pi}{2}(i - 1), \quad \text{for } i \in \{1, 2, 3, 4\}
\end{equation}

  We now address the issue of control points being constrained within the parameter domain. By extending the boundary
  edges of adjacent surfaces, we construct an extended surface  that preserve at least $C^0$ continuity
  with the original surfaces. This process allows the control points, which are previously constrained near the
  domain boundaries, to be relocated into the interior of a unified global parametrization.
  Consequently, this approach enables the generation of geodesic-like curves that traverse along the original surface boundaries. 
  Figure \ref{fig_boundary_restict2}  effectively resolves the domain restriction problem encountered in 
  Figure \ref{fig_boundary_restict1}. Figure \ref{fig_boundary_restict2} presents the geodesic-like curve approximation 
  within the global parameter domain. The transparent surface on the right is an auxiliary surface
generated by a straight extrusion of the boundary curve of the left surface, preserving $C^0$ continuity across the 
interface, as shown in the left panel. 

   The right panel displays the corresponding geodesic-like
   curves on the actual surface geometry. The blue curve, generated using the local domain approach, has a
   length of 11.391, whereas the red curve, obtained using the proposed global domain method, has a length of 11.058.
   As evident from the right panel, the use of a global parametrization allows the control points
   to extend beyond the original surface boundaries.
   Since the shortest path between two boundary points remains strictly confined to the boundary,
   the optimization naturally yields a curve that aligns with the interface rather than entering the extruded surface region.
   As shown in the right panel, global parametrization permits control
points to extend beyond the original surface boundaries. Since the
shortest path between two boundary points is constrained to lie on
the boundary, the optimization naturally produces a curve that
adheres to the interface rather than entering the extruded region.

\begin{figure}
\centering
\includegraphics[width=.45\textwidth]{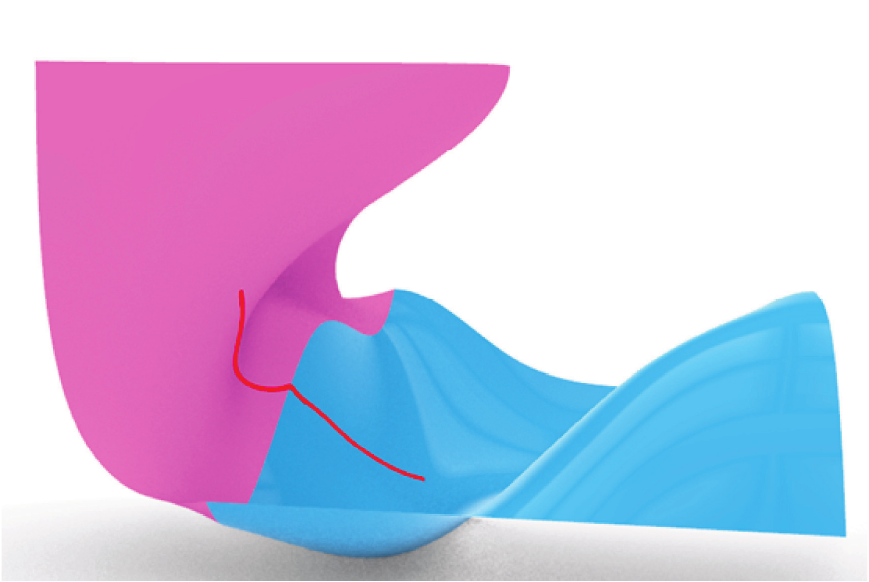}
\includegraphics[width=.45\textwidth]{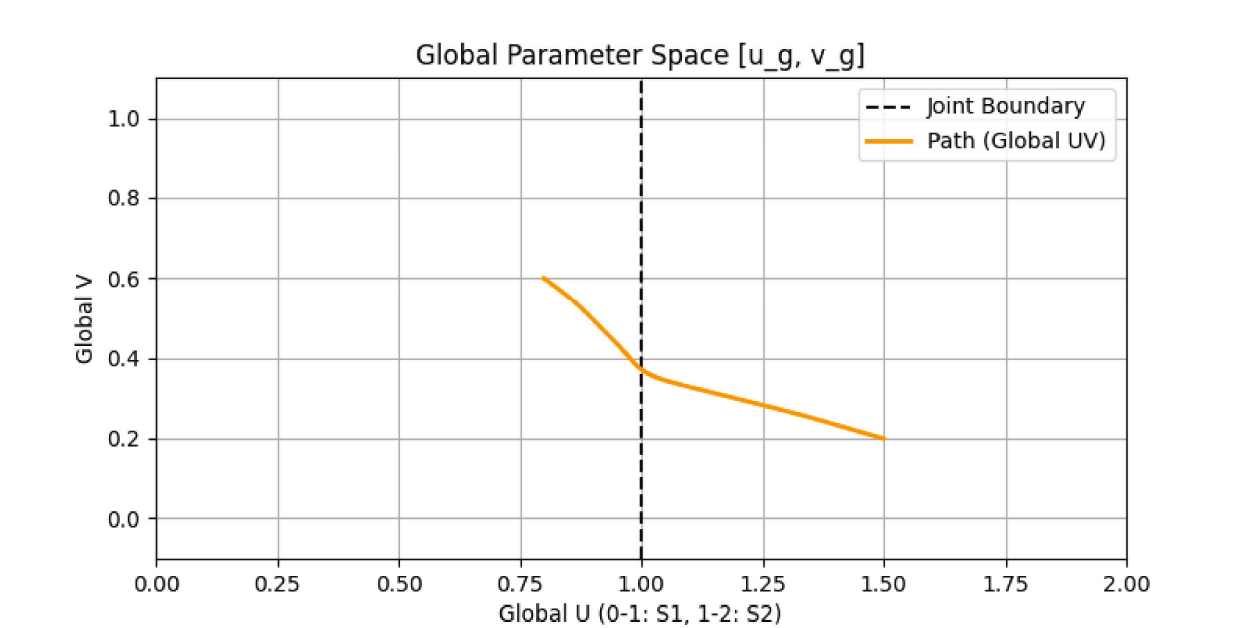}
\caption{Approximation of a geodesic-like curve spanning two surfaces. The right panel shows the curve in the parametric domain,
while the red curve in the left panel represents the geodesic-like approximation crossing both surfaces.}\label{fig_multisurf01}
\end{figure}

\begin{figure}
\centering
\includegraphics[width=.45\textwidth]{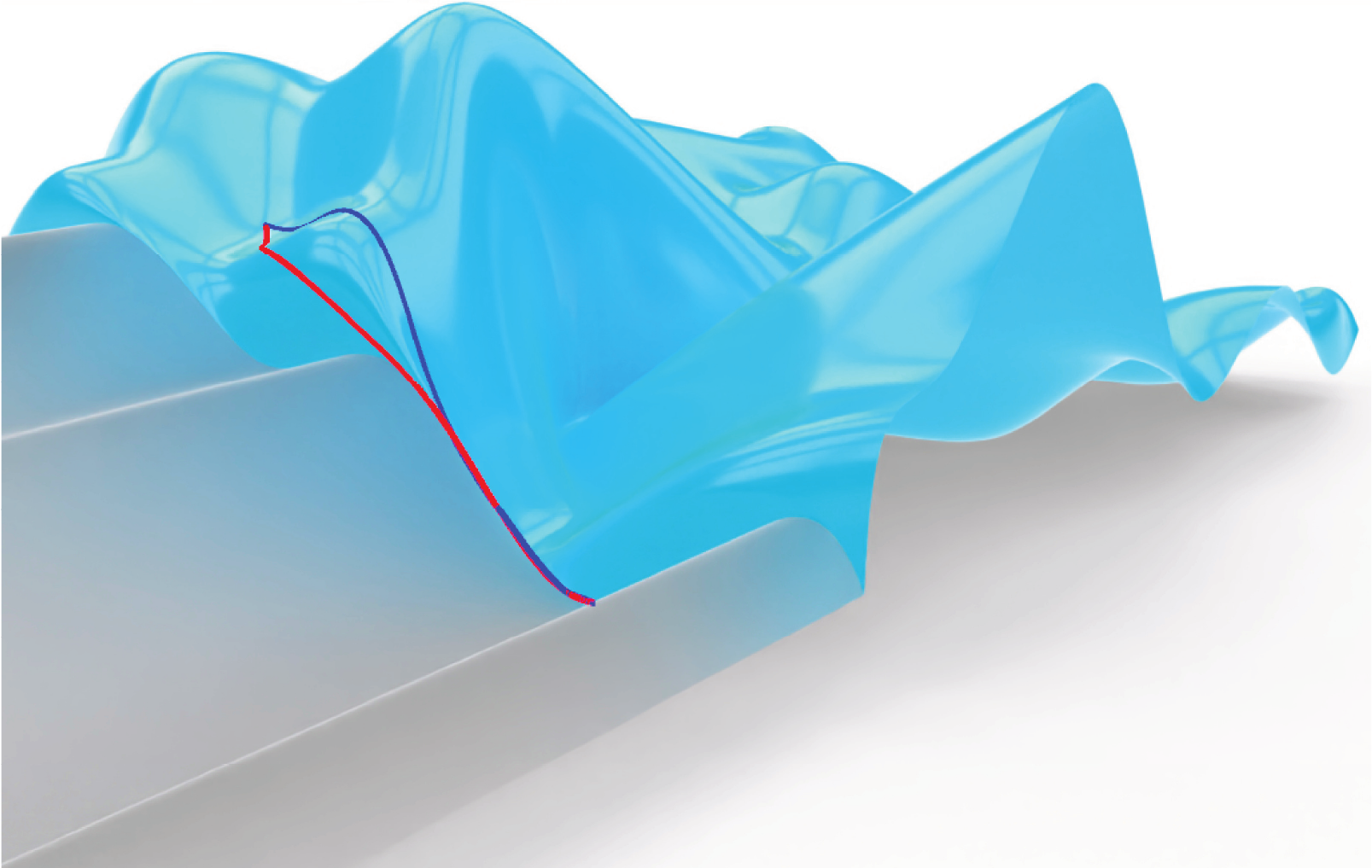}
\includegraphics[width=.45\textwidth]{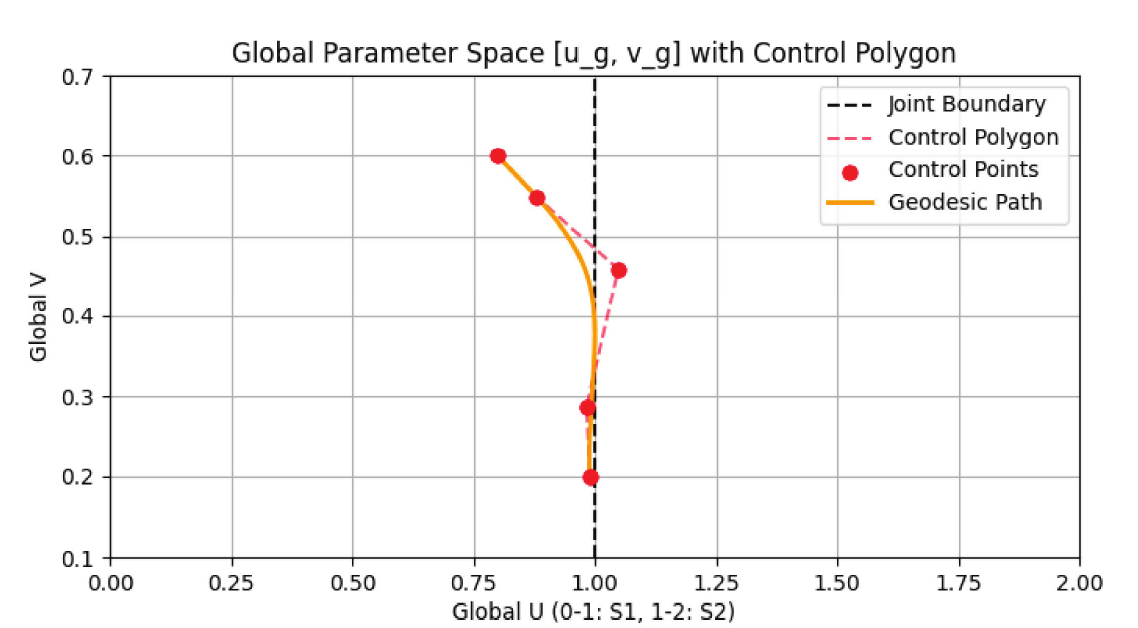}
\caption{Geodesic-like curve approximation using the global domain method:
The right panel illustrates the curves in the global parametric domain, where the light blue surface represents the original
surface and the yellow surface represents the extruded extension. The blue curve (length: 11.391) depicts the geodesic-like
approximation obtained using a local domain approach, while the red curve (length: 11.058) represents the result obtained via
the proposed global domain method.}\label{fig_boundary_restict2}
\end{figure}

By modifying the parameter domain, we can effectively address the
challenges posed by self-intersecting regular (parametric) surfaces.
Consider a surface $\Sigma$ with a parameterization $\mathbf{x}: [a,
b] \times [c, d] \to \Sigma$ such that $\mathbf{x}(a, t) =
\mathbf{x}(b, t)$ for all $t \in [c, d]$. This implies that the
boundaries at $u=a$ and $u=b$ coincide. We define a mapping $\phi:
\mathbb{R} \times [c, d] \to [a, b] \times [c, d]$ by $\phi(u, v) =
(\bar{u}, v)$, where $\bar{u} \in [a, b]$ is given by $u = \bar{u} +
n(b-a)$ for some $n \in \mathbb{Z}$. Consequently, we define a new
function $\bar{\mathbf{x}} = \mathbf{x} \circ \phi$ with the domain
$\mathbb{R} \times [c, d]$, whose image remains identical to
$\Sigma$. For convenience, we refer to this parameterization as a
u-lattice parameterization and its domain as a u-lattice domain over
$[a, b] \times [c, d]$. Evidently, $\bar{\mathbf{x}}(u, v) =
\bar{\mathbf{x}}(u + n(b-a), v)$ for every integer $n$. From this
perspective, we can compute geodesic-like curves on surfaces with
self-intersecting boundaries. Figure \ref{fig_geodesic_revolsurface}
illustrates geodesic-like estimations on three surfaces of
revolution: the red curve represents the shortest path between two
points, the purple curve represents a geodesic-like curve winding
once around the surface, and the blue curve represents a
geodesic-like curve winding twice around the surface, respectively.

\begin{figure}
\centering
\includegraphics[width=.3\textwidth]{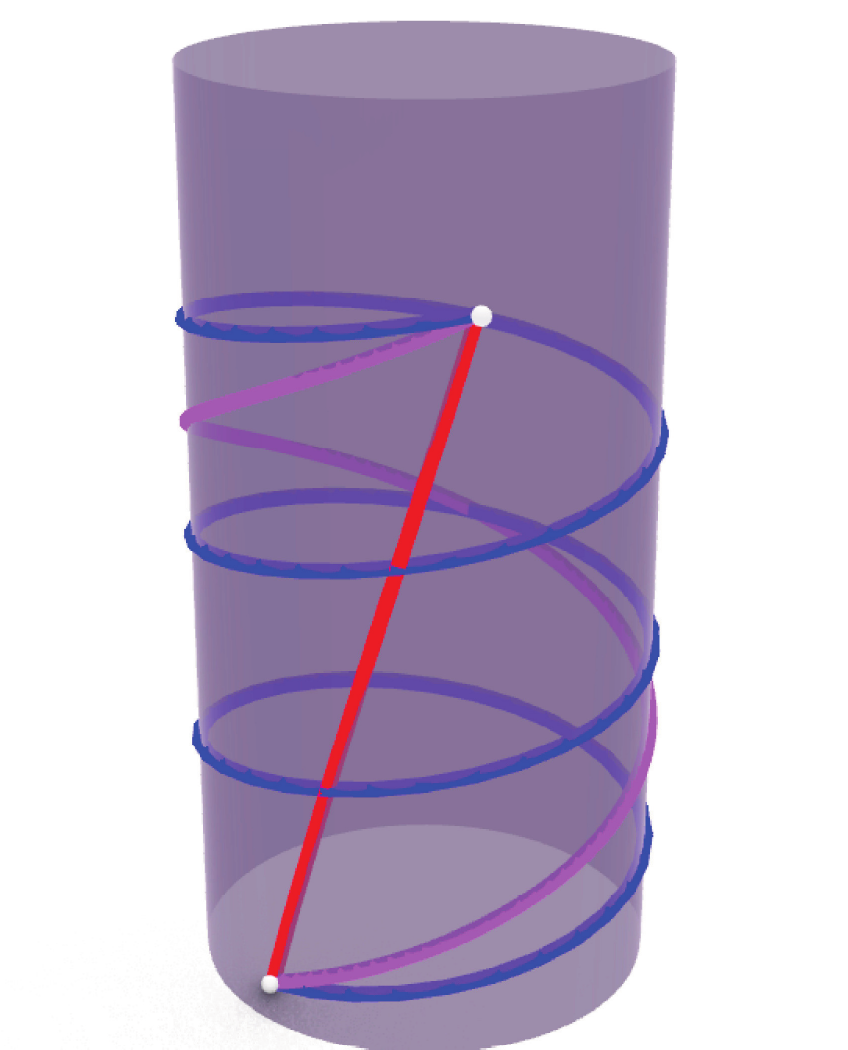}
\includegraphics[width=.3\textwidth]{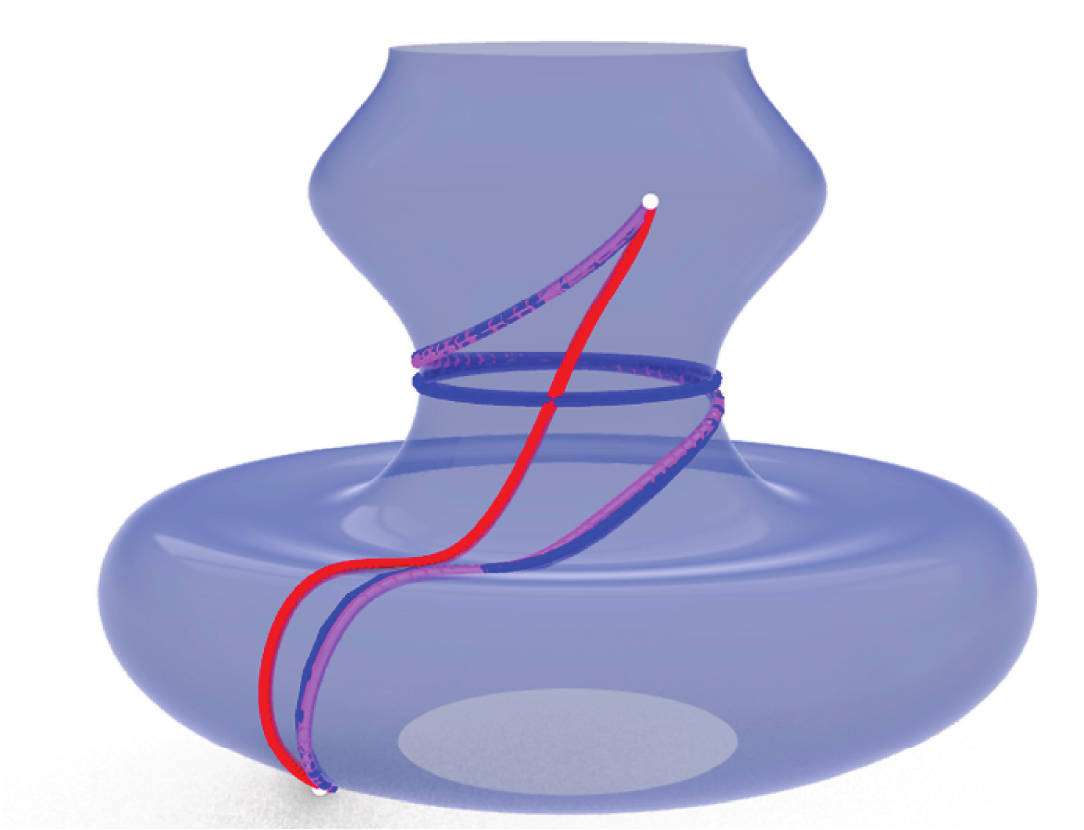}
\includegraphics[width=.3\textwidth]{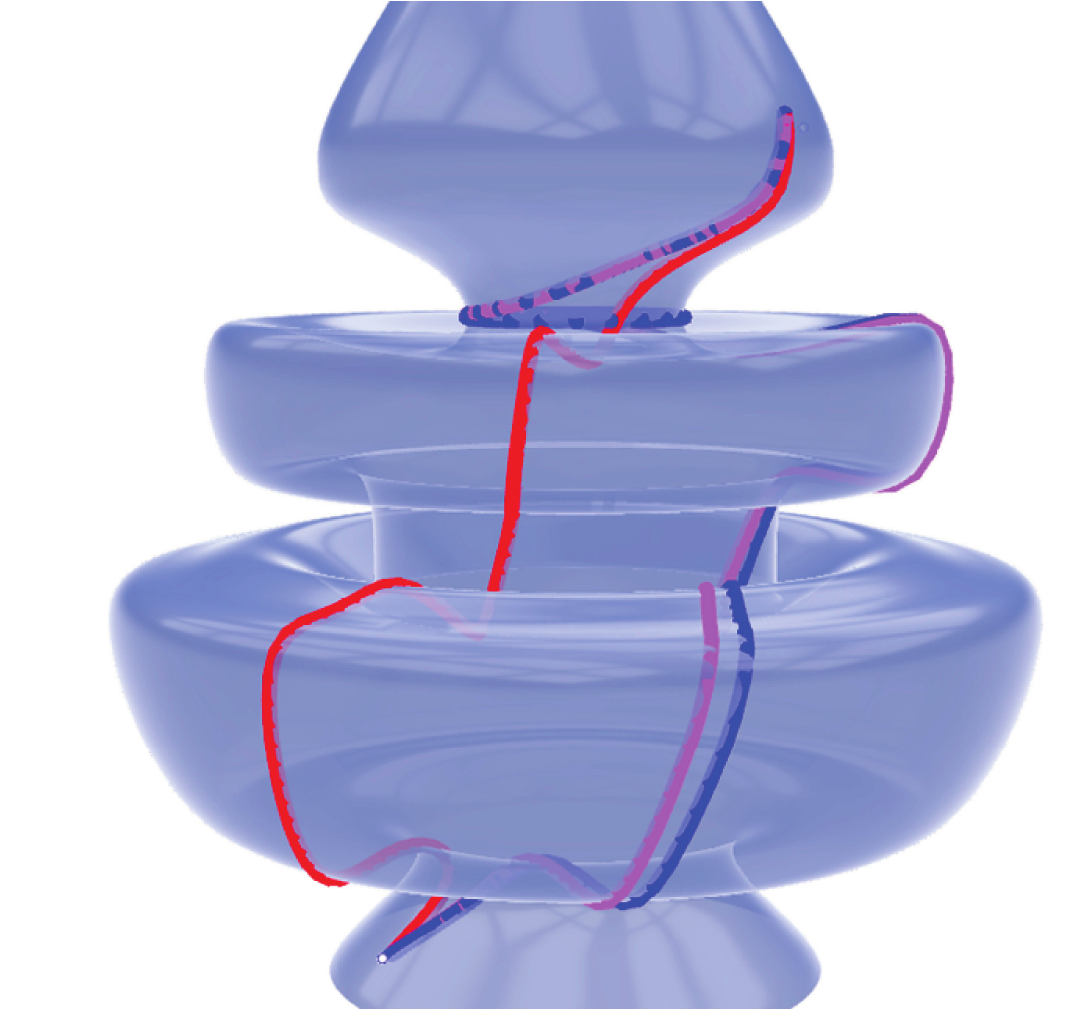}
\caption{Approximations of geodesic-like curves on surface of revolutions}\label{fig_geodesic_revolsurface}
\end{figure}

\section{Conclusion}

In this paper, we presented a novel neural-network-based framework
for geodesic-like curve estimation on parametric surfaces. The main
findings of this study are summarized as follows:

\begin{enumerate}
    \item \textbf{Computational Efficiency and Scalability:}
     We demonstrated that, for surfaces with moderate geometric variations,
     even a lightweight fully connected neural network, consisting of two hidden layers with 32 and 64 neurons,
     can effectively approximate geodesic-like trajectories. The estimated path lengths are competitive
     with those produced by established numerical methods. Furthermore, the proposed framework exhibits strong scalability:
     for more complex surface geometries, the network depth and width can be increased, while residual connections
     may be incorporated to mitigate gradient vanishing and thereby enhance robustness.

    \item \textbf{Robustness Against Local Optima:}
    A significant challenge in geodesic computation is the tendency
    of traditional solvers to become trapped in local minima. By employing a multi-path search strategy within
    our neural network framework, Xessentially a parallel optimization mechanism, we achieve improved global
    exploration capability, thereby providing a more efficient alternative to conventional iterative schemes.

    \item \textbf{Unified Global Parametrization for Multi-surface Systems:}
    By extending the concept of parameter-domain manipulation, we successfully addressed the estimation of geodesic-like curves
    across interconnected multi-surface systems. This approach effectively resolves boundary-related constraints
    by allowing control points to extend beyond the original local domains.
    The introduction of an extended surface configuration facilitates smooth transitions across parametric domains,
    thereby overcoming the computational difficulties commonly encountered when geodesics approach surface interfaces.

    \item \textbf{Topological Flexibility and Generalization:}
    Our proposed domain-manipulation strategy provides a versatile framework capable of handling diverse
    surface topologies, including surfaces of revolution and closed surfaces.
    By leveraging lattice parameterization, the framework effectively computes not only the shortest path
    between two points but also higher-order winding paths, such as curves wrapping around the surface $n$ times.
    These results confirm the broad applicability of the proposed approach across a wide range of topological
    and geometric settings.
\end{enumerate}

\clearpage 
\appendix
\section{Appendix}

\begin{table}[H]
\centering
\small 
\tabcolsep=2pt 
\caption{The relative errors of surface 1}
\begin{tabular}{|c|c|c|c|c|c|c|c|c|c|c|c|}
\hline
Deg & Cpts & Mean & Median & Trim. Mean & P95 & IQR & IQR Ratio & Neg & $>0.01$ & $>0.02$ & $>0.10$ \\ \hline
1 & 4 & 0.025968 & 0.015985 & 0.017857 & 0.084826 & 0.031950 & 0.146989 & 37 & 3072 & 2220 & 149 \\ \hline
1 & 5 & 0.018999 & 0.011464 & 0.012714 & 0.062532 & 0.023337 & 0.130554 & 45 & 2638 & 1719 & 44 \\ \hline
1 & 6 & 0.014830 & 0.008399 & 0.009484 & 0.047769 & 0.018564 & 0.104042 & 50 & 2246 & 1319 & 29 \\ \hline
1 & 10 & 0.006831 & 0.003170 & 0.003654 & 0.022490 & 0.007516 & 0.047188 & 75 & 1047 & 304 & 9 \\ \hline
1 & 14 & 0.004359 & 0.001626 & 0.001899 & 0.012907 & 0.004047 & 0.025918 & 84 & 388 & 162 & 10 \\ \hline
1 & 18 & 0.003355 & 0.000981 & 0.001153 & 0.008572 & 0.002469 & 0.015979 & 96 & 215 & 145 & 10 \\ \hline
1 & 22 & 0.002800 & 0.000664 & 0.000780 & 0.006447 & 0.001649 & 0.010732 & 101 & 182 & 144 & 11 \\ \hline
1 & 26 & 0.002596 & 0.000569 & 0.000669 & 0.005860 & 0.001384 & 0.009026 & 111 & 176 & 142 & 10 \\ \hline
1 & 30 & 0.002320 & 0.000396 & 0.000454 & 0.004407 & 0.000902 & 0.005902 & 117 & 170 & 145 & 12 \\ \hline
2 & 4 & 0.019955 & 0.008240 & 0.009686 & 0.084624 & 0.022111 & 0.086415 & 43 & 2223 & 1459 & 170 \\ \hline
2 & 5 & 0.009281 & 0.003539 & 0.004226 & 0.037863 & 0.009277 & 0.052263 & 56 & 1262 & 587 & 31 \\ \hline
2 & 6 & 0.005216 & 0.001968 & 0.002234 & 0.017769 & 0.004816 & 0.033551 & 80 & 629 & 218 & 7 \\ \hline
2 & 10 & 0.002218 & 0.000315 & 0.000390 & 0.005717 & 0.000975 & 0.007356 & 123 & 173 & 135 & 5 \\ \hline
2 & 14 & 0.001787 & 0.000106 & 0.000131 & 0.003173 & 0.000323 & 0.002445 & 176 & 162 & 131 & 6 \\ \hline
2 & 18 & 0.001724 & 0.000050 & 0.000063 & 0.002277 & 0.000146 & 0.000961 & 253 & 163 & 133 & 7 \\ \hline
2 & 22 & 0.001776 & 0.000032 & 0.000039 & 0.002752 & 0.000097 & 0.000636 & 473 & 159 & 133 & 9 \\ \hline
2 & 26 & 0.001845 & 0.000025 & 0.000032 & 0.003112 & 0.000083 & 0.000546 & 799 & 157 & 131 & 7 \\ \hline
2 & 30 & 0.002196 & 0.000025 & 0.000032 & 0.006248 & 0.000088 & 0.000577 & 1187 & 169 & 138 & 10 \\ \hline
3 & 4 & 0.022859 & 0.008631 & 0.010514 & 0.099715 & 0.026747 & 0.104847 & 38 & 2307 & 1594 & 246 \\ \hline
3 & 5 & 0.010569 & 0.003324 & 0.004168 & 0.048208 & 0.010470 & 0.054071 & 60 & 1364 & 737 & 47 \\ \hline
3 & 6 & 0.005391 & 0.001626 & 0.001962 & 0.021663 & 0.004784 & 0.035879 & 90 & 666 & 267 & 9 \\ \hline
3 & 10 & 0.002087 & 0.000219 & 0.000281 & 0.005394 & 0.000733 & 0.005554 & 162 & 178 & 135 & 5 \\ \hline
3 & 14 & 0.001702 & 0.000065 & 0.000083 & 0.003169 & 0.000222 & 0.001679 & 282 & 155 & 128 & 6 \\ \hline
3 & 18 & 0.001714 & 0.000033 & 0.000041 & 0.002497 & 0.000106 & 0.000806 & 494 & 158 & 130 & 6 \\ \hline
3 & 22 & 0.001930 & 0.000024 & 0.000030 & 0.003756 & 0.000080 & 0.000528 & 943 & 161 & 133 & 7 \\ \hline
3 & 26 & 0.002235 & 0.000024 & 0.000031 & 0.006793 & 0.000085 & 0.000560 & 1509 & 160 & 132 & 8 \\ \hline
3 & 30 & 0.002422 & 0.000025 & 0.000033 & 0.009750 & 0.000094 & 0.000620 & 1769 & 160 & 132 & 9 \\ \hline
4 & 5 & 0.012315 & 0.003566 & 0.004621 & 0.058468 & 0.012228 & 0.062593 & 56 & 1481 & 888 & 70 \\ \hline
4 & 6 & 0.006298 & 0.001679 & 0.002073 & 0.028501 & 0.005290 & 0.038437 & 80 & 771 & 374 & 17 \\ \hline
4 & 10 & 0.002107 & 0.000246 & 0.000308 & 0.005958 & 0.000805 & 0.006085 & 175 & 183 & 134 & 4 \\ \hline
4 & 14 & 0.001721 & 0.000063 & 0.000083 & 0.003627 & 0.000224 & 0.001695 & 326 & 159 & 131 & 5 \\ \hline
4 & 18 & 0.001727 & 0.000029 & 0.000038 & 0.002954 & 0.000106 & 0.000801 & 590 & 156 & 128 & 6 \\ \hline
4 & 22 & 0.001890 & 0.000022 & 0.000030 & 0.003824 & 0.000083 & 0.000630 & 1067 & 155 & 127 & 6 \\ \hline
4 & 26 & 0.002192 & 0.000023 & 0.000030 & 0.006805 & 0.000084 & 0.000635 & 1542 & 158 & 131 & 6 \\ \hline
4 & 30 & 0.002460 & 0.000024 & 0.000032 & 0.010366 & 0.000086 & 0.000569 & 1784 & 162 & 134 & 9 \\ \hline
5 & 6 & 0.007367 & 0.001804 & 0.002318 & 0.036242 & 0.006253 & 0.044261 & 70 & 918 & 498 & 22 \\ \hline
5 & 10 & 0.002293 & 0.000272 & 0.000343 & 0.007225 & 0.000900 & 0.006806 & 166 & 206 & 146 & 4 \\ \hline
5 & 14 & 0.001729 & 0.000067 & 0.000088 & 0.003856 & 0.000245 & 0.001858 & 320 & 163 & 132 & 4 \\ \hline
5 & 18 & 0.001704 & 0.000031 & 0.000039 & 0.003162 & 0.000100 & 0.000762 & 597 & 155 & 128 & 5 \\ \hline
5 & 22 & 0.001835 & 0.000022 & 0.000029 & 0.003953 & 0.000083 & 0.000626 & 1107 & 150 & 123 & 5 \\ \hline
5 & 26 & 0.002133 & 0.000023 & 0.000030 & 0.006504 & 0.000082 & 0.000620 & 1541 & 156 & 128 & 6 \\ \hline
5 & 30 & 0.002341 & 0.000025 & 0.000033 & 0.009672 & 0.000091 & 0.000692 & 1810 & 157 & 129 & 7 \\ \hline
\end{tabular}
\end{table}

\begin{table}[htbp]
\centering
\small
\tabcolsep=2pt 
\caption{The relative errors of surface 2}
\begin{tabular}{|c|c|c|c|c|c|c|c|c|c|c|c|}
\hline
Deg & Cpts & Mean & Median & Trim. Mean & P95 & IQR & IQR Ratio & Neg & $>0.01$ & $>0.02$ & $>0.10$ \\ \hline
1 & 4 & 0.011681 & 0.007357 & 0.007923 & 0.038078 & 0.013299 & 0.096866 & 53 & 1992 & 861 & 8 \\ \hline
1 & 5 & 0.008941 & 0.005363 & 0.005723 & 0.029816 & 0.009651 & 0.082741 & 68 & 1450 & 505 & 2 \\ \hline
1 & 6 & 0.007361 & 0.004142 & 0.004469 & 0.023748 & 0.007610 & 0.060438 & 68 & 1074 & 345 & 5 \\ \hline
1 & 10 & 0.004061 & 0.001906 & 0.002053 & 0.012497 & 0.003589 & 0.028378 & 76 & 359 & 138 & 2 \\ \hline
1 & 14 & 0.003085 & 0.001160 & 0.001291 & 0.008493 & 0.002295 & 0.022131 & 80 & 216 & 130 & 1 \\ \hline
1 & 18 & 0.002651 & 0.000821 & 0.000927 & 0.007350 & 0.001660 & 0.016651 & 80 & 201 & 126 & 0 \\ \hline
1 & 22 & 0.002331 & 0.000629 & 0.000706 & 0.007036 & 0.001243 & 0.010647 & 77 & 194 & 125 & 4 \\ \hline
1 & 26 & 0.002247 & 0.000556 & 0.000629 & 0.006892 & 0.001093 & 0.008578 & 85 & 198 & 122 & 3 \\ \hline
1 & 30 & 0.002257 & 0.000419 & 0.000479 & 0.008295 & 0.000818 & 0.006353 & 87 & 211 & 136 & 5 \\ \hline
2 & 4 & 0.009576 & 0.005440 & 0.005980 & 0.031448 & 0.010843 & 0.080966 & 68 & 1567 & 588 & 9 \\ \hline
2 & 5 & 0.007158 & 0.003488 & 0.003915 & 0.025507 & 0.007597 & 0.059855 & 81 & 1041 & 364 & 5 \\ \hline
2 & 6 & 0.005633 & 0.002399 & 0.002723 & 0.019621 & 0.005512 & 0.044451 & 78 & 726 & 238 & 4 \\ \hline
2 & 10 & 0.002550 & 0.000680 & 0.000783 & 0.008868 & 0.001628 & 0.016315 & 90 & 217 & 123 & 0 \\ \hline
2 & 14 & 0.001829 & 0.000312 & 0.000350 & 0.004817 & 0.000679 & 0.006859 & 95 & 174 & 116 & 0 \\ \hline
2 & 18 & 0.001569 & 0.000187 & 0.000208 & 0.003494 & 0.000370 & 0.003732 & 101 & 161 & 111 & 0 \\ \hline
2 & 22 & 0.001543 & 0.000129 & 0.000148 & 0.004592 & 0.000270 & 0.002607 & 105 & 172 & 117 & 1 \\ \hline
2 & 26 & 0.001491 & 0.000103 & 0.000115 & 0.004230 & 0.000213 & 0.002149 & 125 & 171 & 116 & 0 \\ \hline
2 & 30 & 0.001608 & 0.000085 & 0.000100 & 0.004439 & 0.000196 & 0.001549 & 173 & 177 & 120 & 3 \\ \hline
3 & 4 & 0.009800 & 0.005578 & 0.006170 & 0.032289 & 0.010945 & 0.082763 & 89 & 1621 & 613 & 9 \\ \hline
3 & 5 & 0.007326 & 0.003553 & 0.003987 & 0.026884 & 0.007868 & 0.059385 & 84 & 1081 & 380 & 8 \\ \hline
3 & 6 & 0.005815 & 0.002445 & 0.002819 & 0.020594 & 0.005959 & 0.047621 & 91 & 778 & 268 & 3 \\ \hline
3 & 10 & 0.002715 & 0.000673 & 0.000788 & 0.010241 & 0.001711 & 0.016910 & 101 & 256 & 127 & 1 \\ \hline
3 & 14 & 0.001821 & 0.000264 & 0.000308 & 0.004978 & 0.000652 & 0.006036 & 113 & 166 & 117 & 2 \\ \hline
3 & 18 & 0.001513 & 0.000146 & 0.000168 & 0.003351 & 0.000333 & 0.003362 & 139 & 156 & 108 & 0 \\ \hline
3 & 22 & 0.001504 & 0.000099 & 0.000113 & 0.003816 & 0.000214 & 0.001831 & 147 & 167 & 116 & 1 \\ \hline
3 & 26 & 0.001527 & 0.000080 & 0.000092 & 0.004207 & 0.000177 & 0.001498 & 218 & 172 & 113 & 2 \\ \hline
3 & 30 & 0.001580 & 0.000067 & 0.000077 & 0.004673 & 0.000155 & 0.001489 & 328 & 174 & 118 & 1 \\ \hline
4 & 5 & 0.007512 & 0.003655 & 0.004139 & 0.027052 & 0.008093 & 0.063900 & 95 & 1135 & 391 & 8 \\ \hline
4 & 6 & 0.005893 & 0.002495 & 0.002895 & 0.021042 & 0.006094 & 0.048673 & 87 & 815 & 275 & 4 \\ \hline
4 & 10 & 0.002964 & 0.000784 & 0.000903 & 0.011131 & 0.001919 & 0.018907 & 113 & 277 & 136 & 1 \\ \hline
4 & 14 & 0.001779 & 0.000290 & 0.000343 & 0.005238 & 0.000750 & 0.007577 & 119 & 162 & 105 & 0 \\ \hline
4 & 18 & 0.001452 & 0.000147 & 0.000174 & 0.003567 & 0.000370 & 0.003736 & 129 & 151 & 105 & 0 \\ \hline
4 & 22 & 0.001490 & 0.000095 & 0.000109 & 0.003458 & 0.000221 & 0.001868 & 164 & 162 & 113 & 2 \\ \hline
4 & 26 & 0.001470 & 0.000071 & 0.000083 & 0.003823 & 0.000165 & 0.001583 & 240 & 162 & 115 & 1 \\ \hline
4 & 30 & 0.001533 & 0.000063 & 0.000075 & 0.005162 & 0.000159 & 0.001346 & 407 & 167 & 111 & 4 \\ \hline
5 & 6 & 0.006050 & 0.002642 & 0.003038 & 0.021277 & 0.006254 & 0.049771 & 105 & 834 & 275 & 3 \\ \hline
5 & 10 & 0.003199 & 0.000854 & 0.000983 & 0.011949 & 0.002208 & 0.018308 & 112 & 311 & 150 & 1 \\ \hline
5 & 14 & 0.001971 & 0.000323 & 0.000386 & 0.006692 & 0.000866 & 0.008437 & 127 & 175 & 110 & 1 \\ \hline
5 & 18 & 0.001500 & 0.000159 & 0.000188 & 0.003634 & 0.000409 & 0.004116 & 139 & 149 & 104 & 0 \\ \hline
5 & 22 & 0.001380 & 0.000094 & 0.000110 & 0.003016 & 0.000217 & 0.002126 & 168 & 153 & 104 & 1 \\ \hline
5 & 26 & 0.001442 & 0.000073 & 0.000085 & 0.003605 & 0.000168 & 0.001362 & 267 & 160 & 108 & 1 \\ \hline
5 & 30 & 0.001463 & 0.000060 & 0.000072 & 0.004616 & 0.000152 & 0.001537 & 399 & 168 & 112 & 0 \\ \hline
\end{tabular}
\end{table}

\end{document}